\documentclass[12pt]{article}
\setlength{\evensidemargin}{0.0cm}
\setlength{\oddsidemargin}{0.0cm}
\setlength{\topmargin}{-1.5cm}
\setlength{\textheight}{8.7in}
\setlength{\textwidth}{16.3cm}
\setlength{\parindent}{0mm}
\setlength{\parskip}{\medskipamount}

\usepackage{amsmath}
\usepackage{amssymb}
\usepackage{amsthm}
\usepackage{graphicx}
\usepackage{lineno}
\usepackage{float}
\usepackage[all]{xy}
\usepackage{mathrsfs}
\usepackage{tikz-cd}

\usepackage{xcolor}
\usepackage{tikz}
\usetikzlibrary{tqft}

\newtheorem{theorem}{Theorem}

\newtheorem{definition}{Definition}

\newcommand{\be}{\begin{equation}}
\newcommand{\ee}{\end{equation}}

\newcommand{\bse}{\begin{subequations}}
\newcommand{\ese}{\end{subequations}}
\newcommand{\bea}{\begin{eqnarray}}
\newcommand{\eea}{\end{eqnarray}}
\newcommand{\ba}{\begin{array}}
\newcommand{\ea}{\end{array}}
\newcommand{\bc}{\begin{center}}
\newcommand{\ec}{\end{center}}

\usepackage{braket}

%%%%%%%%%%%%%%%%%%%%%%%%%%%%%%%%%%%%%%%%%%%%%%%%%%%%%%%%%%%%%%%%%%%%%%%%%%%%%%%%%%%%%%%%%

\newcommand{\Gr}{{\rm Gr}}

\newcommand{\Tr}{{\rm Tr}}

\newcommand{\lra}{\longrightarrow}

\newcommand{\sfM}{\mathsf{M}}

\newcommand{\del}{\partial}

\newcommand{\medn}{\medbreak \noindent}

\begin{document}

%\title{\bf Universal quantum computation in topological quantum neural networks and amplituhedron representation}

\title{\bf Amplituhedra for generic quantum processes via the TQNN representation of UQC}

\author{{Chris Fields$^1$, James F. Glazebrook$^2$, Antonino Marcian\`{o}$^3$ and Emanuele Zappala$^4$}\\ \\
{\it$^1$ Allen Discovery Center}\\
{\it Tufts University, Medford, MA 02155 USA}\\
{fieldsres@gmail.com} \\
{ORCID: 0000-0002-4812-0744} \\ \\
{\it$^2$ Department of Mathematics and Computer Science,} \\
{\it Eastern Illinois University, Charleston, IL 61920 USA} \\
{\it and} \\
{\it Adjunct Faculty, Department of Mathematics,}\\
{\it University of Illinois at Urbana-Champaign, Urbana, IL 61801 USA}\\
{jfglazebrook@gmail.com}\\
{ORCID: 0000-0001-8335-221X}\\ \\
{\it$^3$ Center for Field Theory and Particle Physics \& Department of Physics} \\
{\it Fudan University, Shanghai, CHINA} \\
{\it and} \\
{\it Laboratori Nazionali di Frascati INFN, Frascati (Rome), ITALY} \\
{marciano@fudan.edu.cn} \\
{ORCID: 0000-0003-4719-110X} \\ \\
{\it$^4$ Department of Mathematics and Statistics,} \\
{\it Idaho State University, Pocatello, ID 83209, USA} \\
{emanuelezappala@isu.edu} \\
{ORCID: 0000-0002-9684-9441} \\ \\
}
\maketitle

\begin{abstract}
\noindent
We study the relationship between computation and scattering both operationally (hence phenomenologically) and formally.  We develop a representation of universal quantum computation (UQC) within the formalism of topological quantum neural networks (TQNNs), using the Reshetikhin-Turaev and Turaev-Viro models to show how TQNNs implement quantum error-correcting codes.  We then exhibit a formal correspondence between TQNNs and amplituhedra to support the existence of amplituhedra for representing generic quantum processes.  This construction shows how amplituhedra are geometric representations of underlying topological structures.  We conclude by pointing to applications areas enabled by these results.

\end{abstract}

\textbf{Keywords:} Local Operations, Classical Communication Protocols, Quantum Reference Frame, Execution Trace, Topological Quantum Neural Network, Topological Quantum Field theory, Reshetikhin-Turaev invariant, Turaev-Viro invariant, Universal Quantum Computation, Amplituhedra.

\tableofcontents

\section{Introduction}

The relationship between quantum computation and physical processes such as scattering has been of interest since Feynman suggested, in 1982, that a sufficiently large quantum computer could exactly simulate any physical process \cite{feynman:82}.  This suggestion was proved correct by Lloyd in 1996 \cite{lloyd:96}.  The practical demands of particle physics have motivated the development of implementable quantum algorithms to simulate scattering processes, e.g. to calculate amplitudes and/or phase shifts \cite{elvang:15,henn:14,jordan:14,turro:24}.  On the other hand, the further development of theoretical models \cite{nielsen:00}, and physical implementations \cite{shor:94} of quantum computers has demonstrated that many physical processes can implement general, or ``universal'' quantum computation (UQC).  In particular, single-photon linear optics using the Knill-Laflamme-Milburn (KLM) protocol \cite{knill:01} can implement UQC, as can planar-graph, i.e. quantum walk representations of scattering, including the Bose-Hubbard \cite{childs:13} and Fermi-Hubbard \cite{bao:15} models.  Since it is straightforward to interpret any measurable physical process as computation \cite{fields:89, horsman:14}, the universality of scattering as an implementation of computation suggests that {\em any} physical process can be treated as scattering.  This accords with intuition: scattering is the most fundamental observable manifestation of physical interaction.

We can depict these intuitions with a diagram:

\begin{equation} \label{comp-scat}
\begin{gathered}
\begin{tikzpicture}
\node at (0,.4) {$f$};
\draw [thick, ->] (-2,0) -- (2,0);
\node at (-2.2,0) {$x$};
\node at (2.4,0) {$f(x)$};
\node at (-0.4,-2) {$\rho_M(t_i)$};
\node at (4,-2) {$\rho_M(t_f)$};
\draw [thick, ->] (0.3,-2) -- (3.3,-2);
\node at (1.8,-2.4) {$\mathscr{P}_M(t)$};
\draw [thick, ->] (-0.4,-1.7) -- (-2,-0.3);
\node at (-1.7,-1) {$Q_C$};
\draw [thick, ->] (3.8,-1.7) -- (2.3,-0.3);
\node at (2.4,-1) {$Q_C$};
\draw [thick] (-0.3,-1.7) -- (-0.7,-0.1);
\draw [thick, ->] (-0.8,0.1) -- (-1,1);
\node at (-0.2,-0.5) {$Q_S$};
\node at (-1,1.3) {$\rho_S(t_i)$};
\draw [thick, ->] (3.9,-1.7) -- (3.1,1);
\node at (4,-0.5) {$Q_S$};
\node at (3.2,1.3) {$\rho_S(t_f)$};
\draw [thick, ->] (-0.4,1.3) -- (2.5,1.3);
\node at (1,1.6) {$Scat$};
\draw [thick, ->] (-2.1,0.2) -- (-1.2,1);
\node at (-2.2,0.7) {$Imp$};
\draw [thick, ->] (2.2,0.3) -- (2.9,1);
\node at (2,0.7) {$Imp$};
\end{tikzpicture}
\end{gathered}
\end{equation}

Here $\mathcal{P}_M(t): \mathcal{H}_M \rightarrow \mathcal{H}_M$ is the time-propagator acting on the Hilbert space $\mathcal{H}_M$ of some effectively-isolated quantum system or ``machine'' $M$.  The operator $Q_C : \mathcal{H}_M \rightarrow \{0,1\}^N$ sends states $\rho_M(t)$ to $N$-bit strings, while the operator $Q_S : \mathcal{H}_M \rightarrow \mathcal{H}_S$ sends states $\rho_M(t)$ to state representations in some scattering model $S$, e.g. the KLM model.  Requiring that $fQ_C = Q_C \mathcal{P}_M(t)|_{t_i \rightarrow t_f}$ interprets $\mathcal{P}_M(t)|_{t_i \rightarrow t_f}$ as a computation of the function $f: \{0,1\}^N \rightarrow \{0,1\}^N$, which clearly must be Turing-computable to construct the mapping \cite{horsman:14}.  Requiring that the scattering process $Scat$ be such that the map $Q_S$ is well-defined renders it a description of $\mathcal{P}_M(t)|_{t_i \rightarrow t_f}$ as (typically multi-particle) scattering.  The claim that $Scat$ implements UQC is then just the claim that the implementation map $Imp$ is well-defined, i.e. that the entire diagram commutes. We will make the commutativity of Diagram~\eqref{comp-scat} more precise below in a categorical setting compatible with the LOCC protocols required for operational measurements.

Diagram \eqref{comp-scat} raises an obvious question.  All current evidence suggests that the Standard Model (SM) provides a correct description of most physical interactions in regimes where gravity can be ignored.  Choosing a map $Q_S = Q_{\mathrm{SM}}$ should, therefore, allow us to represent $\mathcal{P}_M(t)$ for almost any system $M$ as SM scattering.  The claim that Diagram \eqref{comp-scat} commutes is then the claim that for any computable $f$, a mapping $Imp$ exists that implements the computation of $f$ as SM scattering, i.e. that SM scattering implements UQC.  From this perspective, the generation of magic-state resources that can enable quantum-computational advantage by accelerators \cite{white:24} is to be expected.

Representing $\mathcal{P}_M(t)$ in planar $\mathcal{N}=4$ supersymmetric Yang-Mills (SYM) theory, and hence choosing $Q_S = Q_{\mathrm{SYM}}$ in Diagram \eqref{comp-scat}, raises a deeper question.  As shown by Arkani-Hamed and Trnka~\cite{arkani:14}, scattering processes in SYM can be associated with subspaces -- {\em amplituhedra} -- of particular positive Grassmanians, from which their amplitudes can be computed without resort to off-shell excitations or perturbation theory.  The claim that Diagram \eqref{comp-scat} commutes is, in this case, not only the claim that SYM scattering implements UQC, but also that any time-propagator $\mathcal{P}_M(t)$, associated with any quantum computation, for which the former is representable in SYM, has an associated amplituhedron from which its amplitudes can be computed; see \S \ref{implement} below for an explicit statement. Here, we address this second question from a general perspective.

As this work spans two traditionally-distinct areas of research that have each developed its own concepts, methods, and notation, we provide an ample amount of background material with necessary references.  In outline, we proceed as follows.  We begin in \S \ref{locc} by reviewing the characterization of physical processes as computations developed in \cite{horsman:14}, placing this in the operational setting of {\em Local Operations, Classical Communication} (LOCC) protocols \cite{chit:14}.  We review previous results \cite{fgm:22a} showing how the map $Q_C$ can always be identified with a quantum reference frame (QRF) \cite{bartlett:07} represented as an operator on a qubit (quantum bit) space,\footnote{The state of a qubit consists of all superpositions $\alpha \vert 0 \rangle + \beta \vert 1 \rangle$ (where $\vert \alpha \vert^2 + \vert \beta \vert^2 =1$) of the ``computational'' basis states $\vert 0 \rangle$ and $\vert 1 \rangle$.} and how sequential measurements with such QRFs can be associated with topological quantum field theories (TQFTs) \cite{atiyah:88} with qubit spaces as boundaries.  This representation is Turing-complete, assuring, consistent with the universality results in \cite{deutsch:85}, all and only Turing-computable functions $f$ in Diagram \eqref{comp-scat}.  We proceed in \S \ref{uqc-tqnn} to consider TQFTs in a spin-network basis \cite{penrose:71,smollin:97}, which we have previously developed as models of quantum computation under the rubric of {\em topological quantum neural networks} (TQNNs) \cite{marciano:22, marciano:24}.  A TQNN is, effectively, a superposition of conventional QNNs; we show in \cite{marciano:22, marciano:24} that the classical limit of a TQNN that computes a function $f$ is a deep-learning network that computes $f$.  Here we show explicitly that spin-networks are universal as a data structure, then show in detail that TQNNs implement UQC, employing the work of Freedman et al. \cite{freedman:02a,freedman:02b} to establish a correspondence between UQC and the theory of 3-manifold TQFTs. To this purpose, we use the underlying topological skein theory of TQNNs as ribbon graphs to directly identify the {\em Reshetikhin-Turaev invariant} of a TQNN as a {\em Turaev-Viro quantum error correction code} implementing UQC (reviewed in \cite{koenig:10}).  This universality of quantum processes implemented within a TQNN is summarized by Theorem \ref{main-tqnn-1}. We then proceed in \S \ref{scatter-ampl}, towards a formal correspondence between UQC as implemented by TQNNs and amplituhedra, and briefly examine its implications for spin-foam models and BF theory. In \S\ref{exec-th}, this formal correspondence is stated by Theorem \ref{main-tqnn-2} whose proof draws upon the concept of the {\em execution trace} of a quantum computation, as outlined earlier in \S\ref{exec-trace}.  Theorem \ref{main-tqnn-2} establishes that TQNNs have associated amplituhedra; the association of amplituhedra with generic quantum processes then follows from the universality of TQNNs are implementations of UQC.  We conclude in \S \ref{concl} with a brief discussion of amplituhedra as representations of generic quantum processes, pointing out that edge complexity may provide a generic measure of QRF sharing in information-exchange processes, and pointing to potential applications in computational complexity, neuroscience, and other areas.

%%%%%%%%%%%%%%%%%%%%%%%%%%%%%%%%%%%%%%%%%%%

\section{Computation and scattering in an operational setting}\label{locc}

\subsection{LOCC protocols with external clocks}\label{clocks}

Let us start with some physical intuitions.  Suppose Alice ($A$) prepares a quantum state $\psi_i$ that can be regarded, in practice, as protected from decoherence by some environmental constraints.  This state evolves under the action of some unitary $U$ for some time $\Delta t$, after which Bob ($B$) measures the resulting state $\psi_f = U_{\Delta t}(\psi_i)$.  What is required for Bob's measurement of $\psi_f$ to confirm that the action of $U$ over $\Delta t$ has preserved some relevant symmetry?  Clearly, Bob needs to know $\psi_i$.  Bob cannot measure $\psi_i$, but can obtain a description of Alice's preparation procedure --- or equivalently, an outcome of Alice's measurement of $\psi_i$ --- as a classical communication from Alice.\footnote{If it is Alice who, after evolving through $\Delta t$, measures $\psi_f$, she must rely on her classical memory of her previous outcome, a ``message to her future self''.}

In this scenario, Alice and Bob each perform local operations --- state preparation and/or measurement --- on a shared quantum system, and also communicate classically.  It is, therefore, a LOCC protocol \cite{chit:14} with $U$ as the quantum channel; such protocols are comprehensively described in \cite{chit:14}, to which we refer for details.  The protocol is rendered causal by the directionality of the classical communication from Alice to Bob.  To agree on the interval $\Delta t$, Alice and Bob must also share access to a clock that is implemented by their joint environment, but is isolated from and hence not interfering with $U$.

The canonical LOCC protocol is a Bell/EPR experiment, where Alice and Bob must agree, via classical communication, to employ specified detectors in specified ways, and must later exchange their accumulated data, or transfer to the 3rd party Charlie ($C$), in the form of classical records.  Treating $A$ and $B$ as mutually-separable quantum systems and using the representations of QRFs as operators and channels as TQFTs developed in \cite{fgm:22a}, we showed in \cite{fgm:24a} that any LOCC protocol can be represented by Diagram \eqref{locc-diag}, in which $A$ and $B$ are separated from their joint environment $E$ by a holographic screen $\mathscr{B}$ \cite{fm:20,fgm:22b}, implement read/write QRFs $Q_A$ and $Q_B$, respectively, and communicate via classical and quantum channels implemented by $E$.

\begin{equation} \label{locc-diag}
\begin{gathered}
\begin{tikzpicture}[every tqft/.append style={transform shape}]
\draw[rotate=90] (0,0) ellipse (2.8cm and 1 cm);
\node[above] at (0,1.9) {$\mathscr{B}$};
\draw [thick] (-0.2,1.9) arc [radius=1, start angle=90, end angle= 270];
\draw [thick] (-0.2,1.3) arc [radius=0.4, start angle=90, end angle= 270];
\draw[rotate=90,fill=green,fill opacity=1] (1.6,0.2) ellipse (0.3 cm and 0.2 cm);
\draw[rotate=90,fill=green,fill opacity=1] (0.2,0.2) ellipse (0.3 cm and 0.2 cm);
\draw [thick] (-0.2,-0.3) arc [radius=1, start angle=90, end angle= 270];
\draw [thick] (-0.2,-0.9) arc [radius=0.4, start angle=90, end angle= 270];
\draw[rotate=90,fill=green,fill opacity=1] (-0.6,0.2) ellipse (0.3 cm and 0.2 cm);
\draw[rotate=90,fill=green,fill opacity=1] (-2.0,0.2) ellipse (0.3 cm and 0.2 cm);
\draw [rotate=180, thick, dashed] (-0.2,0.9) arc [radius=0.7, start angle=90, end angle= 270];
\draw [rotate=180, thick, dashed] (-0.2,0.3) arc [radius=0.1, start angle=90, end angle= 270];
\draw [thick] (-0.2,0.5) -- (0,0.5);
\draw [thick] (-0.2,-0.1) -- (0,-0.1);
\draw [thick] (-0.2,-0.9) -- (0,-0.9);
\draw [thick] (-0.2,-0.3) -- (0,-0.3);
\draw [thick, dashed] (0,0.5) -- (0.2,0.5);
\draw [thick, dashed] (0,-0.1) -- (0.2,-0.1);
\draw [thick, dashed] (0,-0.9) -- (0.2,-0.9);
\draw [thick, dashed] (0,-0.3) -- (0.2,-0.3);
\node[above] at (-3,1.7) {Alice};
\node[above] at (-3,-1.7) {Bob};
\node[above] at (2.8,1.7) {$E$};
\draw [ultra thick, white] (-0.9,1.5) -- (-0.7,1.5);
\draw [ultra thick, white] (-1,1.3) -- (-0.8,1.3);
\draw [ultra thick, white] (-1,1.1) -- (-0.8,1.1);
\draw [ultra thick, white] (-1,0.9) -- (-0.8,0.9);
\draw [ultra thick, white] (-1.1,0.7) -- (-0.8,0.7);
\draw [ultra thick, white] (-1.1,0.5) -- (-0.8,0.5);
\draw [ultra thick, white] (-1,-0.9) -- (-0.8,-0.9);
\draw [ultra thick, white] (-1,-1.1) -- (-0.8,-1.1);
\draw [ultra thick, white] (-1,-1.3) -- (-0.8,-1.3);
\draw [ultra thick, white] (-0.9,-1.5) -- (-0.7,-1.5);
\draw [ultra thick, white] (-0.9,-1.7) -- (-0.7,-1.7);
\draw [ultra thick, white] (-0.8,-1.9) -- (-0.6,-1.9);
\draw [ultra thick, white] (-0.8,-2.1) -- (-0.6,-2.1);
\node[above] at (-1.3,1.4) {$Q_A$};
\node[above] at (-1.3,-2.4) {$Q_B$};
\draw [rotate=180, thick] (-0.2,2.3) arc [radius=2.1, start angle=90, end angle= 270];
\draw [rotate=180, thick] (-0.2,1.7) arc [radius=1.5, start angle=90, end angle= 270];
\draw [thick] (-0.2,1.9) -- (0,1.9);
\draw [thick] (-0.2,1.3) -- (0,1.3);
\draw [thick, dashed] (0.2,1.9) -- (0,1.9);
\draw [thick, dashed] (0.2,1.3) -- (0,1.3);
\draw [thick] (-0.2,-1.7) -- (0,-1.7);
\draw [thick] (-0.2,-2.3) -- (0,-2.3);
\draw [thick, dashed] (0.2,-1.7) -- (0,-1.7);
\draw [thick, dashed] (0.2,-2.3) -- (0,-2.3);
\draw [ultra thick, white] (0.3,2) -- (0.3,1.2);
\draw [ultra thick, white] (0.5,2) -- (0.5,1.2);
\draw [ultra thick, white] (0.7,1.9) -- (0.7,1.1);
\draw [ultra thick, white] (0.3,-2.4) -- (0.3,-1.5);
\draw [ultra thick, white] (0.5,-2.4) -- (0.5,-1.5);
\draw [ultra thick, white] (0.7,-1.8) -- (0.7,-1.5);
\node[above] at (4.5,-2.4) {Quantum channel};
\draw [thick, ->] (2.9,-2) -- (0.7,-0.8);
\node[above] at (4.5,-1.4) {Classical channel};
\draw [thick, ->] (2.9,-0.9) -- (2.3,-0.6);
\end{tikzpicture}
\end{gathered}
\end{equation}
Note both that $A$ and $B$ being mutually separable is required for the assumption of classical communication via a causal channel in $E$, and that this assumption renders $Q_A$ and $Q_B$ noncommutative and hence subject to quantum contextuality \cite{fg:21,fg:23}.

Two defining characteristics of LOCC protocols are worth emphasizing \cite{fgmz:24}:
\begin{itemize}
    \item[(1)] $A$ and $B$ both perform only {\em local} operations.  They must, therefore, each employ spatial QRFs acting on $\mathscr{B}$, which we will denote $X_A$ and $X_B$, respectively, with respect to which they specify the position of the quantum degrees of freedom that they manipulate, e.g. the positions of the detectors in a Bell/EPR experiment.  These spatial QRFs must commute with the QRFs $Q_A$ and $Q_B$ that they, respectively, employ to manipulate the quantum channel, i.e. $[X_A, Q_A] = [X_B, Q_B] =_{def} 0$.

    \item[(2)]
$A$ and $B$ must both comprise sufficient degrees of freedom for them both to implement their respective QRFs and to communicate classically.  This is, effectively, a large $N$-limit that assures their separability as physical systems.
\end{itemize}

Canonical LOCC protocols such as Bell/EPR allow, and may require, Alice and Bob to make simultaneous measurements of their shared quantum channel.  One can, however, impose causality by requiring that Bob make his measurement only after, according to some shared clock, Alice has made hers, and moreover requiring that Alice does not interact with the channel during this interval. It is natural to represent such causal protocols by Diagram \eqref{causal-locc}, where one channel is treated as quantum and the other as classical:

\begin{equation} \label{causal-locc}
\begin{gathered}
\begin{tikzpicture}[every tqft/.append style={transform shape}]
\draw[rotate=90] (0,0) ellipse (2cm and 1 cm);
\node[above] at (0,2.1) {$\mathscr{B}_{A}$};
\draw [thick] (-0.2,1.1) arc [radius=1, start angle=90, end angle= 270];
\draw [thick] (-0.2,0.5) arc [radius=0.4, start angle=90, end angle= 270];
\draw[rotate=90,fill=green,fill opacity=1] (0.8,0.2) ellipse (0.3 cm and 0.2 cm);
\draw[rotate=90,fill=green,fill opacity=1] (-0.6,0.2) ellipse (0.3 cm and 0.2 cm);
\draw [ultra thick, white] (-1,0.5) -- (-0.8,0.5);
\draw [ultra thick, white] (-1,0.3) -- (-0.8,0.3);
\draw [ultra thick, white] (-1.1,0.1) -- (-0.8,0.1);
\draw [ultra thick, white] (-1.1,-0.1) -- (-0.8,-0.1);
\draw [ultra thick, white] (-1,-0.3) -- (-0.8,-0.3);
\draw[rotate=90] (0,-4) ellipse (2cm and 1 cm);
\draw[rotate=90,fill=green,fill opacity=1] (0.8,-4) ellipse (0.3 cm and 0.2 cm);
\draw[rotate=90,fill=green,fill opacity=1] (-0.6,-4) ellipse (0.3 cm and 0.2 cm);
\draw [thick, dashed] (-0.1,0.5) -- (0.9,0.5);
\draw [thick, dashed] (-0.1,1.1) -- (0.8,1.1);
\draw [thick, dashed] (-0.1,-0.3) -- (0.9,-0.3);
\draw [thick, dashed] (-0.1,-0.9) -- (0.8,-0.9);
\draw [thick] (0.9,0.5) -- (4,0.5);
\draw [thick] (0.8,1.1) -- (4,1.1);
\draw [thick] (0.9,-0.3) -- (4,-0.3);
\draw [thick] (0.8,-0.9) -- (4,-0.9);
\draw [rotate=180, thick, dashed] (-4,0.9) arc [radius=1, start angle=90, end angle= 270];
\draw [rotate=180, thick, dashed] (-4,0.3) arc [radius=0.4, start angle=90, end angle= 270];
\node[above] at (4,2.1) {$\mathscr{B}_{B}$};
\node at (-1.7,0.3) {$Q_A$};
\node at (5.6,0.3) {$Q_B$};
\node at (-1.9,1.7) {Alice};
\node at (5.7,1.7) {Bob};
\draw [thick, ->] (-1.7,-2.6) -- (5.7,-2.6);
\node at (2,-3) {Externally-measured time};
\end{tikzpicture}
\end{gathered}
\end{equation}

If we interpret Alice's action with $Q_A$ as a preparation of the state of the sector of her boundary $\mathscr{B}_A$ that is input to the quantum channel, and interpret Bob's action with $Q_B$ as a measurement of the state of the corresponding sector of $\mathscr{B}_B$, Diagram \eqref{causal-locc} describes either an instance of quantum computation or scattering.  In either of these cases, Alice employs the classical channel to send Bob a description of her preparation's procedures, and hence of the state she has prepared.

\subsection{Execution traces and computations}\label{exec-trace}

Computation is often thought of as an abstract process implemented by an algorithm that is itself specified abstractly by a program written in some formalized language.  In practice, however, all computations are implemented on finite physical systems that employ finite space, time, and energy as resources.  Any instance of computing some function $f$ on some finite data $x$ using some finite device/machine $M$ is a finite sequence of states of $M$ --- an {\em execution trace} of $f$ on $M$ --- from some initial state of $M$ that encodes $x$ to some final state of $M$ that encodes $f(x)$.

This notion of an execution trace of a function $f$, and of computation of $f$ as abstraction to ``all possible'' inputs to $f$, can be made precise following \cite{horsman:14}.  We consider $M$ to be effectively isolated, and let $\mathscr{P}_M(t)$ be the time-propagator for states $\rho_M(t)$ of $M$ as in Diagram \eqref{comp-scat}, where by employing the density representation we acknowledge that the states of interest are coarse-grained to some in-practice feasible resolution. The function $f: x \mapsto f(x)$, where $x$ and $f(x)$ are finite instances of some finite data types.  The device $M$ {\em implements} $f$ on $x$ if and only if there exist maps $\zeta$ and $\xi$ such that Diagram \eqref{comp-def} commutes, i.e. $f \zeta = \xi \mathscr{P}_M(t)$ over some finite interval $\Delta t = t_f - t_i$, the {\em execution time} of $f$ on $M$ for input $x$.

% \begin{equation} \label{comp-def}
% \begin{gathered}
% \begin{tikzpicture}
% \node at (0,.4) {$f$};
% \draw [thick, ->] (-1,0) -- (1,0);
% \node at (-1.4,0) {$x$};
% \node at (1.5,0) {$f(x)$};
% \node at (-1.4,-2) {$\rho_M(t_i)$};
% \node at (1.6,-2) {$\rho_M(t_f)$};
% \draw [thick, ->] (-0.7,-2) -- (0.8,-2);
% \node at (-.1,-2.4) {$\mathscr{P}_M(\Delta_t)$};
% \draw [thick, ->] (-1.4,-1.7) -- (-1.4,-0.3);
% \node at (-1.7,-1) {$\zeta$};
% \draw [thick, ->] (1.5,-1.7) -- (1.5,-0.3);
% \node at (1.8,-1) {$\xi$};
% \end{tikzpicture}
% \end{gathered}
% \end{equation}
In what follows, we consider the space $l^\infty(X)$, where $X$ is the finite set of data types, whose states $S(l^\infty(X))$ are probability distributions on $X$, $Prob(X)$. Then, $f$ is a map $S(l^\infty(X))\longrightarrow S(l^\infty(X))$. The states $\rho_M(t)$ are linear functionals on a von  Neumann algebra $\mathscr{N}$ associated to $M$. We construct the following two categories. For the first one, We let the objects be the state spaces denoted $S(M)$ as with respect to $\mathscr{N}$, and consider as morphisms the evolution operators $S(M) \longrightarrow S(M)$. For the second category, we let the objects be $l^\infty(X)$ for some finite $X$ of data types. We let $Prob(X)$ denote the space of states on $l^\infty(X)$, and consider the morphisms $f: Prob(X) \longrightarrow Prob(X)$ defined by the pushforward of $f$.

\begin{equation}\label{comp-def}
\begin{tikzcd}
    S(l^\infty(X))\arrow[rr,"f"] & & S(l^\infty(X))\\
    & & \\
    S(M)\arrow[uu,"\zeta"]\arrow[rr,"\mathscr{P}_M(\Delta_t)"] & & S(M)\arrow[uu,swap,"\xi"].
\end{tikzcd}
\end{equation}
Therefore, commutativity at the set level can be interpreted as an endomorphism (natural transformation) of the forgetful functor from the category of convex sets with affine maps to the category of sets.

The maps $\zeta$ and $\xi$ send physical states to instances of data structures, i.e. symbols, and are therefore in a natural sense {\em interpretations} of states of $M$.  Reversing these arrows to send data structures to physical states yields {\em encoding} of data.  These maps are fundamentally stipulative, with the only constraint being diagram commutativity; hence a given physical system can, in general, be interpreted as computing multiple functions.  We can, however, constrain these maps by assuming that the device $M$ is implementing only a single procedure, and that $M$ returns to the same global ``ready'' state before receiving the next input $x$.  Effectively, these assumptions re-interpret the input $x$ as encoding both the program to be executed and the input to that program, i.e. they treat $x$ as the ``tape'' of a Universal Turing Machine \cite{turing:37}; this is indeed the first UQC model was developed in \cite{{deutsch:85}}.  With these assumptions, specifying an execution trace from some input $x$ specifies $\zeta$ and $\xi$, and therefore specifies $f$; any execution trace is, therefore, an execution trace of a unique function $f$ acting on $x$.

A computation is {\em classical at a scale} $dt$ if and only if, for some $n$ such that $n dt = \Delta t$, there are $n$ states $\rho_M(t_j)$, maps $\zeta_j$, and functions $f_j$ such that the following diagrams commute:
\begin{equation}\label{diag:subdiagrams_j}
\begin{tikzcd}
    S(l^\infty(X))\arrow[rr,"f_j"] & & S(l^\infty(X))\\
    & & \\
    S(M)\arrow[uu,"\zeta_j"]\arrow[rr,"\mathscr{P}_M(\Delta_j)"] & & S(M)\arrow[uu,swap,"\zeta_{j+1}"],
\end{tikzcd}
\end{equation}
for each $j=1, \ldots, n-1$, where $\zeta = \zeta_1$, $\xi = \zeta_n$ in Diagram \eqref{comp-def},with $S(M)$ as above. Setting $x = x_1$, $f(x) = x_n$, $\rho_M(t_i) = \rho_M(t_1)$ and $\rho_M(t_f)$, we obtain the equality $f \zeta(\rho_M(t_i)) = \xi\mathscr{P}_M(\Delta_t)(\rho_M(t_f))$ for each state $\rho_M(t_i) \in S(M)$ as in  Diagram \eqref{comp-def} by pasting all diagrams as in \eqref{diag:subdiagrams_j}. In particular, we have $f_j\cdots f_1(\zeta\rho_M(t_i)) = \zeta_{j+1}\mathscr{P}_M(t_j-t_i)(\rho_M(x_i))$.

The minimum scale $dt_{min}$ at which the process $\mathscr{P}_M(t)$ can be given a classical interpretation is its measurable decoherence scale at the resolution chosen to define the $\rho_M$.   The goal of practical quantum computation is to make $dt_{min}$ large, and hence the number $n$ of classically-characterizable ``steps'' small. Related are several programs outlined in \cite{horsman:14} for implementing such computation. In practice, purpose-built quantum computers achieve this with cryogenics and other isolation mechanisms; accelerators achieve the same result by operating at the GeV scale and above.

\subsection{Defining $Q_C$ as a QRF}

Comparing Diagrams \eqref{causal-locc} and \eqref{comp-def}, and following \cite{horsman:14} in noting that ``interpretation'' and ``measurement'' are operationally the same concept, it is clear that the maps $\xi$ and $\zeta$ in \eqref{comp-def} are realized, in any operational setting, by QRFs.  It is also clear that in practice, the description of a physical process $\mathcal{P}_M(t)$ as implementing a computation of some function $f$ is coherent only if $\xi$ and $\zeta$ are realized by the {\em same} QRF, i.e. by $Q_C$ in Diagram \eqref{comp-scat}.

In classical physics, reference frames are abstract coordinate systems, associated in practice with rulers, clocks, standard charges and masses, etc.  With the recognition that measurement entails interaction, and hence the passage to quantum theory, reference frames explicitly become physical systems, QRFs \cite{bartlett:07}.  Completely specifying a measurement operator requires specifying the associated QRF, so it is natural to represent the QRF itself as an operator.  The simplest QRF is then the $\varphi$-spin operator $(\varphi, \sigma_{\varphi})$ acting on a qubit $q_i$ to yield as output an eigenvalue in $\{+1, -1\}$.  In practice, $\varphi$ may be chosen macroscopically, e.g. by orienting the axis of a Stern-Gerlach apparatus with the Earth's gravitational field, then rendered implicit by writing the QRF as $\sigma_z$.

This can be made more explicit following \cite{mikusch:21}. Consider a measurable (value) space $\Sigma_Q$, and a locally compact group $G$ such that $\Sigma_Q$ is a homogeneous (left) $G$-space, a QRF $Q$ can be defined via a `system of covariance' $(\rm{U}_Q, \mathsf{E}_Q, \mathcal{H}_Q)$, where $\rm{U}_Q$ is a unitary representation of $G$ on a Hilbert space $\mathcal{H}_Q$, $\mathsf{E}_Q$ is a covariant {\em positive operator-valued measure} (POVM) on $\Sigma_Q$ with values in
the bounded linear operators $\mathcal{L}(\mathcal{H}_Q)$. Once these arguments are understood with $G$ given, $\mathsf{E}_Q$ is interpretable as a `frame observable' \cite{carette:25}, see also \cite{palmer:14}.  In a relational setting, these give an operational meaning to spin \cite{giaco:19a,giaco:19b}, and correspond to general spin systems such as those described by spin-networks \cite{mikusch:21}.

Generally, a transformation of QRFs is given by a unitary transformation $\rm{U}: \mathcal{H}_{Q_1} \rightarrow \mathcal{H}_{Q_2}$, where $Q_1$ and $Q_2$ are said to be {\em unitarily equivalent} if, with respect to the value space $\Sigma$, for any $X \in \mathcal{L}(\Sigma)$, we have in terms of POVMs, $\mathsf{E}_2(X) = \rm{U} \mathsf{E}_1(X) \rm{U}^*$  \cite{carette:25}; cf. \cite{mikusch:21,palmer:14,giaco:19a,giaco:19b, hamette:20}. Alternatively, the classical means of matrix transforming reference frames (coordinate systems) with respect to classical $\rm{SO}(3)$ spin transformations, can be quantized by generalizing the Euler angles as operator-valued on the relevant Hilbert spaces \cite{mikusch:21}.

\subsection{Turing-completeness of QRFs}\label{ccd-1}

If Diagram \eqref{comp-def}, and hence Diagram \eqref{comp-scat}, are to represent the computation of any Turing-computable \cite{turing:37} function $f$, then the QRF $Q_C$ must not only implement a Turing-computable function, but also act as an effective (up to memory resources) Turing-complete computational process.  We have shown this to be the case by constructing QRFs as cone-cocone (limit/colimit) diagrams (CCCDs) over finite sets of Barwise-Seligman {\em classifiers} $\mathcal{A}_i$ and their associated {\em infomorphisms} $\{f_{\alpha}, g_{\alpha \beta}\}$ \cite{barwise:97} as shown in Diagram \eqref{cccd-2} below \cite[Th 1]{fgm:22a}; see \cite{fg:21,fg:19} for relevant category-theoretic background.  The classifiers are operators assigning {\em tokens} to {\em types} in a formal language representing probabilities; in the simplest case, they are Boolean gates.\footnote{An equivalent category to that of classifiers (objects) and their infomorphisms (arrows) in \cite{barwise:97} is that of Chu spaces and their morphisms \cite{pratt:99a,pratt:99b} which collectively comprise a Turing-complete syntax, so any computation can be expressed in this way.}  A CCCD implements a hierarchical, memory-read/write computation on whatever data structure is accessed by the $\mathcal{A}_i$; in \cite{fgm:22a}, these are $(\varphi, \sigma_{\varphi})$ QRFs acting on a qubit array, i.e. a holographic screen, with the reference angles $\varphi$ chosen independently for each qubit.  They can be visualized in the form of variational autoencoders \cite{kingma:19}, as in Diagram \eqref{cccd-2}:
\begin{equation}\label{cccd-2}
\begin{gathered}
\xymatrix@C=6pc{\mathcal{A}_1 \ar[r]_{g_{12}}^{g_{21}} & \ar[l] \mathcal{A}_2 \ar[r]_{g_{23}}^{g_{32}} & \ar[l] \ldots ~\mathcal{A}_k \\
&\mathbf{C^\prime} \ar[ul]^{h_1} \ar[u]^{h_2} \ar[ur]_{h_k}& \\
\mathcal{A}_1 \ar[ur]^{f_1} \ar[r]_{g_{12}}^{g_{21}} & \ar[l] \mathcal{A}_2 \ar[u]_{f_2} \ar[r]_{g_{23}}^{g_{32}} & \ar[l] \ldots ~\mathcal{A}_k \ar[ul]_{f_k}
}
\end{gathered}
\end{equation}
In this representation, a multi-component, hierarchical QRF implements a bidirectional mapping between bit strings representing the outcomes obtained by the $\mathcal{A}_i$ --- e.g. $(\varphi, \sigma_{\varphi})$ operators acting on qubits --- and bit strings representing observational outcome as fully processed through the QRF. From this point of view, a QRF computes a function pair $\overrightarrow{g},\overleftarrow{g}$: bit strings $\leftrightarrows$ bit strings, where $\overrightarrow{g}$ and $\overleftarrow{g}$ are the composite maps in to and out from the core classifier $\mathbf{C^\prime}$ in Diagram \eqref{cccd-2}.

\subsection{Implementing computation with scattering models}
\label{implement}

Consider now an operational setting in which $A$ deploys a QRF $Q$ acting on a sector $\Sigma$ of the boundary $\mathscr{B}$ between $A$ and some finite system $U$.  The action of $Q$ on $\Sigma$ yields an outcome, a finite bit string.  Suppose this outcome specifies the initial state of some scattering process.

Now suppose that $U$, which is unobserved by $A$ except in the sector $\Sigma$, implements a propagator $\mathcal{P}_U$
that restricts to the identity on the boundary $\mathscr{B}$ outside of $\Sigma$, and can be characterized by operator-valued functions $\mathcal{F}$, $\mathcal{G}$ with respect to the Hilbert space $\mathcal{H}_{\Sigma}$, such that $\Sigma_1 = \mathcal{F}(\Sigma)$, and $\Sigma_2 = \mathcal{G}(\Sigma)$ are sectors of a boundary $\mathscr{B}^{\prime}$ accessed by $B$ as shown in Diagram \eqref{comp-scatter}.

\begin{equation} \label{comp-scatter}
\begin{gathered}
\begin{tikzpicture}[every tqft/.append style={transform shape}]
\draw[rotate=90] (0,0) ellipse (2.55cm and 1.35 cm);
\node[above] at (0,1.7) {$\mathscr{B}$};
\node at (-0.5,0) {$\Sigma$};
\begin{scope}[tqft/every boundary component/.style={draw,fill=green,fill opacity=1}]
\begin{scope}[tqft/cobordism/.style={draw}]
\begin{scope}[rotate=90]
\pic[tqft/cylinder, name=a];
\pic[tqft/pair of pants, anchor=incoming boundary 1, name=b, at=(a-outgoing boundary 1)];
\end{scope}
\end{scope}
\end{scope}
\draw[rotate=90] (0,-4) ellipse (2.55cm and 1.35 cm);
\node[above] at (4,1.7) {$\mathscr{B}^{\prime}$};
\node at (2,0.7) {};
\node at (4.6,1) {$\Sigma_1$};
\node at (4.6,-1) {$\Sigma_2$};
\draw [thick, ->] (0,-2.7) -- (0,-3.8);
\draw [thick, ->] (4,-2.7) -- (4,-3.8);
\node at (0,-4.2) {$\mathcal{H}_{\Sigma}|_i$};
\node at (4.2,-4.2) {$ {\bigotimes}_{ \{ j \} } \mathcal{H}_{{\Sigma}_{ \{ j \} } }|_f$};
\draw [thick, ->] (0.6,-4.2) -- (2.8,-4.2);
\node at (1.9,-3.9) {$\mathcal{P}_U$};
\node at (-2,0) {A};
\node at (6,0) {B};
\draw [ultra thick, white] (0.3,0.5) -- (0.3,-0.5);
\draw [ultra thick, white] (0.5,0.5) -- (0.5,-0.5);
\draw [ultra thick, white] (0.7,0.5) -- (0.7,-0.5);
\draw [ultra thick, white] (0.9,0.5) -- (0.9,-0.5);
\draw [ultra thick, white] (1.1,0.5) -- (1.1,-0.5);
\draw [ultra thick, white] (2.6,0.4) -- (2.8,0.4);
\draw [ultra thick, white] (2.6,0.2) -- (2.8,0.2);
\draw [ultra thick, white] (2.6,0.0) -- (2.8,0.0);
\draw [ultra thick, white] (2.6,-0.2) -- (2.8,-0.2);
\draw [ultra thick, white] (2.6,-0.4) -- (2.8,-0.4);
\end{tikzpicture}
\end{gathered}
\end{equation}

If $B$ receives $A$'s outcome measurement as a classical communication, Diagram \eqref{comp-scatter} depicts a causal LOCC protocol as in Diagram \eqref{causal-locc}.  With no further information or constraints, $B$ can interpret observational outcomes obtained from $\Sigma_1$ and $\Sigma_2$ as experimental outcomes from a scattering process or as computed outcomes from a simulation, i.e. $B$ can interpret the outcome data using either a scattering model (a QRF $Q_S$) or a computational model (a QRF $Q_C$).  The sectors $\Sigma_1$ and $\Sigma_2$ are, however, cobordant and therefore unitarily related.  The map $Imp$ that relates these models in Diagram \eqref{comp-scat} is, therefore, just $Q_S Q_C^{\dagger}$.

%%%%%%%%%%%%%%%%%%%%%%%%%%%%%%%%%%%%%%%%%%%%%%

\section{Spin-networks and the TQNN implementation of UQC}\label{uqc-tqnn}

We can summarize the previous section in a simple slogan: operationally, quantum computation is unitary evolution interpreted as computation.  This has been understood at least since \cite{deutsch:85}.  Models of quantum computation are ways of making this interpretation obvious.  We focus here on TQNNs --- TQFTs in a spin-network basis \cite{marciano:22, marciano:24} --- as a model of quantum computation that anticipates the structure of amplituhedra.  From an intuitive perspective, it is obvious that TQNNs implement UQC.  Any unitary evolution / quantum computation can be expressed as a TQFT, so any unitary evolution / quantum computation can be implemented by a TQNN. Showing this explicitly reveals the structural details that make this universality useful.\footnote{Note that our framework does not identify a TQFT on spin networks {\it per se} as a neural network. Rather, TQNNs are defined as computational schemes in which transition amplitudes, computed via TQFT techniques on spin network structures, play the role of learnable maps. This distinction is explicitly developed in our previous work where we also showed that the semi-classical limit of TQNNs reproduces classical neural network architectures \cite{marciano:22,marciano:24}. }

\subsection{Spin-networks as a universal data structure}\label{spin-comp-1}

It is also obvious, intuitively, that spin-networks provide a universal data structure: ordered strings of bits provide a universal data structure for classical computation in general \cite{turing:37}, and spin-networks are a quantum generalization of bit strings that allow for generic relations between data items.  The idea of a spin network as a combinatorial method (which is intrinsically computational --- see below) describing a discrete model for a quantum theoretical structure underlying some classical space-time manifold was introduced by R. Penrose in \cite{penrose:71}.  In its basic form, a spin network is an $n$-valent graph in which each edge is labeled by a spin-$j$ representation of the angular momentum covering group  $\rm{SU}(2)$, where $j=0, \frac{1}{2}, 1, \ldots$, satisfying $\vert j_1 - j_2 \vert \leq j_3 \leq j_1 + j_2$, and $j_1 + j_2 + j_3 \in \mathbb{Z}$. Specifically, and more generally, let $G$ be a connected compact Lie group, and let $\Phi$ be a finite, directed graph (FDG) consisting of $N_E$ edges, and $N_V$ vertices, where $E,V$, respectively, denote the edge and vertex sets of $\Phi$. Spin network states are prescribed in \cite{baez:96} as:
\begin{itemize}
\item[i)] $\{ j_i: i \in 1, \ldots, N_E \}$ irreducible $G$-representations assigned to each edge;

\item[ii)] $\{ v_k: k \in 1, \ldots, N_V \}$ intertwining operators --- i.e. $G$-equivariant maps between representations in i) --- assigned to each vertex;

\end{itemize}
with the above conditions satisfied; see e.g. \cite{vaid:22} for the $\rm{SU}(2)$ study-case.

Following \cite{baez:96}, a principal $G$-bundle is definable on $V$, entailing the definition of a space $\mathcal{A}$ of connections on $\Phi$, for which the latter admits a group $\mathcal{G}$ of gauge transformations. A key result is that $L^2(\mathcal{A}/\mathcal{G})$ is spanned by the aforementioned spin-network states. Replacing $V$ by a real analytic manifold, the spin-network $\Phi$ becomes embedded in $M$ as a FDG, and again, $L^2(\mathcal{A}/\mathcal{G})$ is spanned by the spin-network states\footnote{It is well-known, as pointed out in \cite{basu:70,cooper:66}, that a FDG can be viewed as an abstract representation of the flow of control through a computational program. Effectively, if the program instructions are represented by vertices (e.g. intertwiners) of the FDG, then the edges (e.g. spin states and their representations) specify conditions under which execution of the program conform to that instruction. In particular, this amounts to a spin network as an abstract representation of the flow of control within a computational procedure. If $\Phi_{\mathbf{A}}$ and $\Phi_{{B}}$ denote respective spin-networks of states of systems $\mathbf{A}$ and $\mathbf{B}$, then their transformations are those of FDGs. The latter transformations follow by well-known standard rules, as typically implemented by certain moves: e.g. `safe vertex/edge deletion', `stretching', etc. \cite{basu:70,cooper:66}. }  (specifically, \cite[Th. 1,~Th. 2]{baez:96}).

For latter purposes (as below) we could also view a (classical) spin-network as consisting of a {\em cubic ribbon graph} $\Gamma$ --- i.e. an abstract trivalent graph with cyclic ordering of its edges at each vertex --- together with a coloring of its edges set by $\mathbb{N}$, the natural numbers; see e.g. \cite{garou:13}.

\subsection*{Example: Linear optics for photons and computation - spin-networks } \label{optics}

The proposal of Knill-Laflamme-Milburn \cite{knill:01} outlines a convenient and efficient means for quantum computation and general quantum information processing (QIP), using the technology of linear optics applied to photon interference. It involves beam splitter, phase shifters, and other methods --- see e.g. \cite{garcia:10, reck:94, svozil:96} for a discussion. It starts with the fundamental observation that a qubit can be realized by one photon in two possible optical modes --- e.g. via horizontal or vertical polarization. Through several steps, this in turn leads to showing that non-deterministic quantum computation is possible by applying methods of linear optics. Effectively, using entangled states and quantum teleportation to ensure the effective gating, together with quantum coding, culminates in the efficiency of {\em linear optics quantum computation} (LOQC), a means of training towards fault-tolerance for photon loss, detector inefficiency and phase decoherence.

The basic element of an arbitrary spin-network is the coupling between spins as described by the XY Hamiltonian \cite{yang:06}, i.e.
\begin{equation}\label{xy-ham}
H_{XY} = \sum_{\langle i j\rangle} (J_{ij} \sigma_{i}^{+} \sigma_{j}^{-} + h.c.) \,,
\end{equation}

where $\sigma_{i}^{+} \sigma_{j}^{-}$ are the Pauli spin operators, and $J_{ij}$ the coupling coefficients, and $h.c.$ indicates Hermitian conjugate --- the expression in \eqref{xy-ham} for an XY Hamiltonian is explicitly unraveled in \cite[\S 2.2]{grimaudo:22}.

At the same time, beam splitters are fundamental tools of quantum (and classical) optics. From \eqref{xy-ham}, \cite{yang:06} develops a theory for spin wave propagation in {\em star-shaped spin-networks}, a basic form of which is a Y-shaped spin-network that can be implemented in QIP.

Given a single photon in $2^n$ input spatial modes, the scattering matrix corresponds to the unitary operator that determines the quantum evolution of a system of $n$ qubits. Accordingly, a $n$-qubit quantum computer needs $2^n$ orthogonal states in order to encode all the possible values of $n$-bits, along with superposition states. Using encoding on {\em orbital angular momentum} (OAM) states of light, \cite{garcia:10} engages the following techniques: i) phase shifters; ii) holograms; iii) beam splitters; and iv) OAM filters.

Using the appropriate gating methods, \cite{garcia:10} demonstrates that a single photon can be optically manipulated to implement any quantum computation, i.e. to execute, for any computable function $f$ and input $x$ on which $f$ is defined, a superposition of paths through the relevant Hilbert space interpretable as an execution trace for $f$ acting on $x$.  How the latter can involve entanglement between stationary qubits based on resonance scattering in the framework of spin-networks is treated in \cite{jin:09}.

\subsection{Phase gating in computational processes and phase of scattering}\label{gp-scatter}

Since the initial construction of a universal quantum Turing Machine by Deutsch \cite{deutsch:85}, a number of UQC models have been developed, with {\em quantum circuit models} (QCMs) \cite{nielsen:00} the most well-suited to currently-available technology.  The principle class of interest is the class of functions that instantiate quantum computation characterized by a bounded-error and achieved in a polynomial time (BQP) --- in computational complexity theory, BQP denotes indeed the class of decision problems solvable by a quantum computer in polynomial time, with an error probability of at most 1/3 for all instances. The fundamental operation in a QCM is a {\em gate} $\mathfrak{G}$, commonly defined as a unitary transformation on either $\mathbb{C}_i^2$, or $\mathbb{C}_i^2 \otimes \mathbb{C}_j^2$, $1 \leq i,j \leq n$, with the identity defined on all remaining factors. A QCM then comprises a sequence $\{ \mathfrak{G}_k \}$ of $K$ gates ($1 \leq k \leq K$), for some $K$, as applied to some tensor product $(\mathbb{C}^2)^{\otimes k(n)}$ of $k$-qubits, for some $n$ \cite{deutsch:89}.  The sequence of gates specifies the ``program'' being implemented and hence the function $f$ being computed; hence it determines the path through Hilbert space, i.e. the execution trace, for any input $x$.We note that such circuit models may involve (pseudo)random quantum states in terms of random unitary transformations induced by entanglement, comprising an essential resource for UQC, where derandomization is achieved via implementing suitable algorithms, as in  e.g. \cite{turner:16, karayel:25}.

The QCM also provides an intuitive way to see a scattering process as a computational procedure, with ``in'' states and ``out'' states corresponding to the ``input'' and ``output'' states, respectively, of a given quantum circuit \cite[\S 4.3]{vaid:22}. In these cases, collision can be viewed, informally, as ``gating''. From a scattering perspective in terms of phase we can be more precise as follows. Spin chains (say of a QCM model) are derivable from spin networks such as those incorporated into our TQNNs \cite{mar:20}. Reference \cite{kay:05} explicitly demonstrates how geometric phase (GP), including the well--known Berry phase (BP) \cite{berry:90}, can be induced from these networks. Important for our purposes is that BP/GP is highly instrumental for optimal quantum gating, so resulting in fault-tolerance and noise reduction, and thus essential for QIP and QECC (see e.g. \cite{zanardi:99,sjoqvist:08,sjoqvist:15,otten:21,chen:24,
casagrande:24}). From another direction, GP can be seen as ubiquitous in scattering processes and their evolution (see e.g. \cite{liu:11,wang:24,li:25}).  These phases can be related in terms of angular measurement, by implementing standard geometric transformations such as rotation, translation, dilation, conformal transformation, and compositions of these:
$$
\rm{GP~for~(quantum)~gating} ~ \overset{\rm{geometric~ transformation}}\Longleftrightarrow~ \rm{GP~for~scattering}
$$
For the more general topological phases a transformation between these can in principle be achieved by means of a suitable homotopy of the paths along which the respective phases are defined  (see e.g. \cite{aguilar:20}).

\subsection{Universality of topological quantum computers}\label{q-circuits}

Freedman et al. \cite{freedman:02a,freedman:02b} studied computations of BQP functions, and proved universality, for the model of topological quantum computation originally proposed by Kitaev \cite{kitaev:77}.  This model employs low dimensional topology and the theory of anyons, the motion of which in a 2D system defines a braid in 2+1 dimensions.  A QCM can be associated with a spin-network $\Phi$ by commencing from a tensor product $\mathbb{C}_1^2 \otimes \cdots \otimes \mathbb{C}_n^2$ of $n$-copies of $\mathbb{C}^2$ --- effectively, we use qubits to model a system of $n$ non-interacting spin-$\frac{1}{2}$ particles.  From the scattering perspective, an intertwiner is basically a quantum gate providing a unitary transformation acting on a set of incoming colored edges that, in turn, produces a result in terms of outgoing colored edges --- see e.g. \cite{vaid:22}.

Additionally, the spin-color $j_k=k/2$ assigned to a generic edge of the spin-network $\Phi$ can be thought to result from symmetrizing (by means of the Wenzl-Jones projector) $k$ fundamental --- i.e. with spin $j=1/2$ --- representations. While the `action' of a trivalent intertwiner follows from the compatibility conditions among representations flowing into a node/gate. In other words, it expresses the conservation of the total number of incoming and outgoing spin-$1/2$ representations, symmetrized along each edge through the Wenzl-Jones projector. The total number of fundamental representations symmetrized within a spin-network then represent a `system' of $n$ non-interacting spin-$\frac{1}{2}$ particles, with representation space provided by the tensor product $\mathbb{C}_1^2 \otimes \cdots \otimes \mathbb{C}_n^2$ of $n$-copies of $\mathbb{C}^2$.

It is useful here to recall some facts relating to how TQFTs can be defined on triangulated 3-manifolds. This is largely due to Atiyah's ground-breaking work on the subject \cite{atiyah:88,atiyah:90}. Initially, we can think of a partition function that is the sum over the states within the interior of the manifold. Ponzano and Regge \cite{ponz:68} defined a spin state by a labeling, or coloring, of each edge of the triangulation by half-integer representations $j$, as above. This idea was taken some steps further by Turaev and Viro (TV) \cite{tv:92} on taking a triangulation of a 3-manifold $N$, on which a state, or coloring, is given by labeling each edge of the triangulation by a representation of the quantum group ${\rm{U}}_q({\rm sl}(2))$, where $q$ is a $2r^{\rm th}$-root of unity for $r\geq 3$. Here, the partition function $Z_{\rm{TV}}(N)$ for $N$ is given by a finite sum over states within the interior  of $N$, of the product of $q-6j$ symbols $\{ 6j^{(t)}\}_q$ corresponding to tetrahedra $t$, weighted by specific function.  Following \cite{foxon:95}, we later provide the expression for $Z_{\rm{TV}}(N)$, and also describe its functorial nature, both of which will be needed to formulate its relation to UQC, which in turn will be used to derive the universality of TQNN-based computation.

In the limit as the {\em deformation parameter} $q \lra 1$, the TV partition function reduces to that of \cite{ponz:68}, in which the $j_i$ label representations of $\rm{SU}(2)$. It is by these means that spin-networks can be used to study the relationship of the model of \cite{ponz:68} to the loop representation of quantum GR \cite{rovelli:93}. In particular, as pointed in \cite{foxon:95}, the dual graph to a colored triangulation of \cite{ponz:68} is interpreted as a spin-network in a 1-1 correspondence.

%%%%%%%%%%%%%%%%%%%%%%%%%%%%%%%%%%%%%%%%%%%%%%%%%%%%%%%%%%%%%%%%%%%%%%%%%%%%%%%%%%%%%%%%%%%%%

\subsection{Relation between Reshetikhin-Turaev model and Turaev-Viro models} \label{rt-tv}

A famous result of Walker \cite{walker} and Turaev \cite{turaev_shadows} has determined the relation between the Reshetikhin-Turaev (RT) and Turaev-Viro (TV) $3$-manifold invariants. Although their construction relies on fairly different approaches, it turns out that they are closely related. Roughly speaking, we have that one is the square root of the other. More precisely, one finds that the TV invariant is the absolute value squared of the RT invariant.  This relationship is formally stated in terms of spherical  categories \cite{barrett:99} as given in \eqref{wrt-tv} within \S\ref{rt-inv}.

We first briefly recall the construction of the invariants, and then consider Roberts' proof of their equivalence in \cite{roberts:95}, based on the chain-mail invariants. The formulation of the correspondence via chain-mail is employed here because it is particularly suitable to two fundamental aspects of this article: relating topological quantum computing to TQNNs, and relating the latter to the amplituhedron.

We briefly summarize the definitions of RT and TV models here.  Detailed definitions can be found in \cite{turaevquantum,Turaev:1992hq,rt:91}.

The RT model can be defined over any modular tensor category $\mathcal V$, with a choice of a finite family of objects $\{V_i\}_{i\in I}$ satisfying some compatibility conditions. Namely, one needs the objects to satisfy the normalization, duality and non-degeneracy axioms, along with the fact that every object in $\mathcal V$ is dominated by the finite family. Then, given a closed $3$-manifold denoted by $\sfM$, and a framed link $L_{\sfM}$ which presents $\sfM$ via surgery, one can define the RT invariant as follows. By color of $L_{\sfM}$, we define any assignment of an object from $\{V_i\}_{i\in I}$. To each coloring we can assign a link invariant which is the operator invariant obtained by replacing the crossings by braiding in the category, and replacing maxima and minima by (co)pairings induced by duality in $\mathcal V$. A linear combination of these elements of the ground ring, by means of the quantum dimensions of each coloring, is a manifold invariant which we denote by $\tau_{\rm RT}(\sfM)$ after a suitable normalization that depends on the signature of $L_{\sfM}$. This construction can be generalized to manifolds with boundary, where we obtain a functor to the category of vector spaces, i.e. a TQFT. The functor associated to this construction will be indicated by $\tau_{\rm RT}$ as well, as the two constructions coincide when the functor is applied to the $1$-dimensional ground ring for closed manifolds.

The TV model follows a completely different approach, and it is constructed via triangulations of a closed $3$-manifold $\sfM$. One considers the dual triangulation associated to a triangulation $T$, as already mentioned above. The dual triangulation is considered as the geometric support for spin-networks whose admissible spin-colors, in the sense of \cite{KL}, are taken into account. Then the invariant $\tau_{\rm TV}(\sfM)$ is the sum over all admissible spin-colors of the evaluations of the products of colored tetrahedra and $\theta$-nets  spin-networks with some normalization.

It is well known \cite{turaevquantum} that $\tau_{\rm TV}(\sfM) = |\tau_{\rm RT}(\sfM)|^2$ for a closed 3-manifold $\sfM$, where we indicated by $\tau_{\rm TV}$ and $\tau_{\rm RT}$ the functors for TV and RT models, respectively.

More generally, the relation between TV and RT models is shown in \cite{turaevquantum} as TQFTs, i.e. as functors from the cobordism category to the modular (braided) category. One has the equality
\begin{equation}\label{eqn:TV_RT_functor}
    \tau_{\rm TV}\sf(M) = \tau_{\rm RT}(\sfM)\otimes \tau_{\rm RT}(\bar{\sfM})
\end{equation}
which demonstrates the relationship between the two invariants.\footnote{Such a modular category can be constructed from a Hecke algebra at roots of unity, where \cite{blanchet:00} shows explicitly how the RT invariant is recovered from that category.}

\subsection{The chain mail invariant}\label{chain-mail}

We now proceed to discuss how the chain-mail invariant is defined (see Definition 1 below), and how it is used to show the relation between RT and TV invariants. The starting point is a handle decomposition of a given $3$-manifold $\sfM$, with given $d_i$ $i$-handles, for $i=0,1,2,3$. The fundamental piece of information in a handlebody decomposition is how the $2$-handles are attached to the handlebody obtained by attaching $1$-handles to $0$-handles. This construction is closely related to the notion of a Heegaard splitting of a $3$-manifold, i.e. a presentation of $\sfM$ as $\sfM = H_1\cup_S H_2$, where $S$ is a surface along which two homeomorphic handlebodies $H_1$ and $H_2$ are glued. The existence of a Heegaard splitting is obtained by considering a triangulation of $\sfM$, whose existence is known to be guaranteed. By considering the $1$-skeleton $K^1$ of the triangulation and the dual diagram $\Gamma$, one obtains a pair of graphs that can be enlarged to a tubular neighborhood in order to obtain two complementary handlebodies. Glueing them along their boundaries, we obtain a splitting of $\sfM$. This process is schematically shown in Figure 1 and Figure 2.

%Figure~\ref{diag:handle_decomp} and Figure~\ref{diag:fat_handlebody}.

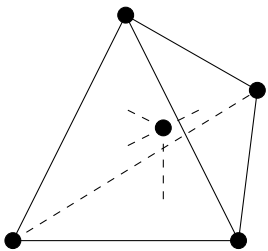
\begin{figure}[H] \label{diag:handle_decomp}
\begin{center}
\begin{tikzpicture}[every tqft/.append style={transform shape}]
\draw (0,0) -- (1.5,3) -- (3,0) -- (0,0);
\draw[dashed] (0,0) -- (3.25,2);
\draw (3.25,2) -- (1.5,3);
\draw (3,0) -- (3.25,2);
\filldraw (0,0) circle (3pt);
\filldraw (1.5,3) circle (3pt);
\filldraw (3,0) circle (3pt);
\filldraw (3.25,2) circle (3pt);
\filldraw (2,1.5) circle (3pt);
\draw[dashed] (2,1.5) -- (2,0.5);
\draw[dashed] (2,1.5) -- (1.5,1.75);
\draw[dashed] (2,1.5) -- (2.5,1.75);
\draw[dashed] (2,1.5) -- (1.5,1.25);
\end{tikzpicture}
\end{center}
\caption{Process to obtain a handle decomposition of a $3$-manifold where a tetrahedron and its dual graph are the spines of the two complementary handlebodies}
\end{figure}

\begin{figure}[H] \label{diag:fat_handlebody}
\begin{center}
\begin{tikzpicture}[every tqft/.append style={transform shape},scale=3]
\node (a) at (1,1.25) {$\mapsto$};
\filldraw (0,1.5) circle (3pt);
\draw[dashed] (0,1.5) -- (0,0.5);
\draw[dashed] (0,1.5) -- (-0.5,1.75);
\draw[dashed] (0,1.5) -- (0.5,1.75);
\draw[dashed] (0,1.5) -- (-0.5,1.25);
\draw[rounded corners] (2.3,1.5) ..controls(2.375,1.65).. (2.5,1.7);
\draw[rounded corners,dashed] (2.3,1.5) ..controls(2.4,1.5).. (2.5,1.7);
\draw (1.75,1.25) ellipse [x radius=0.075cm, y radius=0.125cm];
\draw[rounded corners] (2.3,1.5) ..controls(2.5,1.4).. (1.75,1.375);
\draw[rounded corners] (2.5,0.5) ..controls(2.5,1.25).. (1.75,1.125);
\draw[rounded corners] (2.5,0.5) ..controls(2.6,0.45).. (2.7,0.5);
\draw[rounded corners,dashed] (2.5,0.5) ..controls(2.6,0.55).. (2.7,0.5);
\draw[rounded corners] (3.1,1.225) ..controls(2.7,1.225).. (2.7,0.5);
\draw[rounded corners] (3.1,1.225) ..controls(3.135,1.325).. (3.1,1.4);
\draw[rounded corners] (3.1,1.225) ..controls(3.075,1.325).. (3.1,1.4);
\draw[rounded corners] (2.5,1.7) ..controls(2.75,1.4).. (3.1,1.4);
\end{tikzpicture}
\end{center}
\caption{The spine dual to a tetrahedron is fattened to a handlebody}
\end{figure}
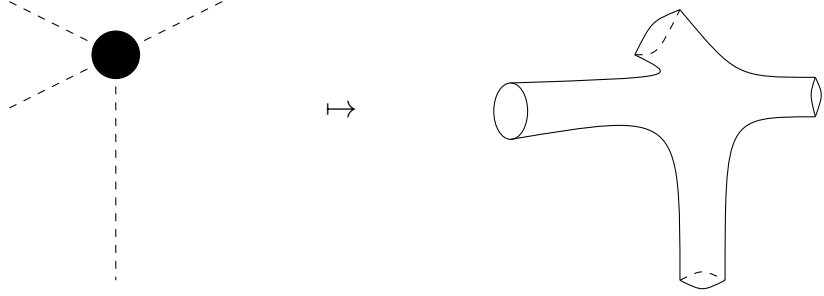

Given $\sfM$ and a handle decomposition, we associate a chain-mail link to it as follows. Consider first the attaching curves $\epsilon_i$ of the $2$-handles on the handlebody $H_1$ obtained by attaching $1$-handles to $0$-handles. We push the curves $\epsilon_i$ inside a ``small'' collar neighborhood of $S = \partial H_1$. Next, we mark the meridional curves $\delta_i$ of the $1$-handles on $H_1$ (see Figure 3). In this way, we obtain a link drawn on the handlebody $H_1$. This is defined as the chain-mail link $C(\sfM,D) \subset H$, where $D$ describes the dependence of this link on the initial data of the handle decomposition. Given an arbitrary orientation-preserving embedding of the handlebody $H$ in $S^3$, we obtain an induced link $C(\sfM,D,E)$ inside $S^3$. A numerical value is associated to $C(\sfM,D)$ by means of skein theory, a generalization of the theory of knots and links of closed 3-manifolds, following the work of Lickorish \cite{lickorish1,lickorish2,lickorish4} and Yokota \cite{yokota:97}. In particular, from the study of framed links in closed 3-manifolds, the skein module $\mathcal{S}$ of $S^3 \cong \rm{SU}(2)$ is shown to be linear 1-dimensional; thus, $\mathcal{S} S^3 = \mathbb{C}$ \cite{yokota:97} and see also \cite[\S 2-3]{lickorish4}.

\begin{definition}
{\rm
            Let $\sfM$ be a manifold with a given handlebody decomposition $D$ where $H$ is the union of its $0$ and $1$-handles, and let $E$ an orientation-preserving embedding of $H$ in $S^3$. Then, the {\em chain-mail invariant} of $\sfM$ is the element of the skein module of $S^3$ defined as
            \begin{equation}\label{eqn:chain-mail}
                    CH(\sfM) = \eta^{d_0+d_3} \Omega C(\sfM,D,E) \in \mathcal S S^3 = \mathbb C,
            \end{equation}
            %\textcolor{red}{[AM: I missed the definition of $\mathcal S $. Why is $\mathcal S S^3$ a complex number? ]\\} \textcolor{blue}{[JFG: see above Definition 1.]}

            where $d_0$ and $d_3$ are the number of $0$ and $3$-handles in the decomposition $D$, $\Omega C(\sfM,D,E)$ indicates the insertion of the $\Omega$ element of skein-theory in each component of the link $C(\sfM,D,E)$ previously defined, and $\eta$ is a normalization constant defined as $\eta = \frac{A^2-A^{-2}}{i\sqrt{2r}}$ for a $2r$ root of unity invariant with $A = e^{\frac{2\pi i}{4r}}$. For these conventions see also \cite{KL}. It is shown in \cite{roberts:95} that $CH(\sfM)$ is independent of the choice of the embedding $E$ (thanks to the insertion of $\Omega$ elements), and the handle decomposition of $M$. It is therefore an invariant of $M$.
}
\end{definition}

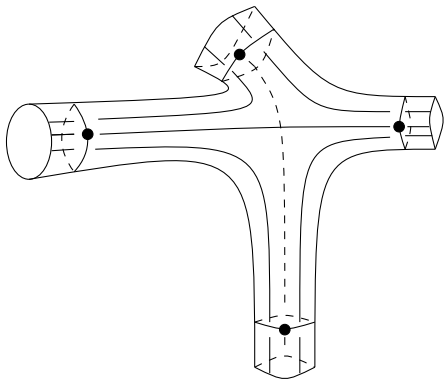
\begin{figure}[H] \label{diag:Omega_insertion}
\begin{center}
\begin{tikzpicture}[every tqft/.append style={transform shape},scale=4]
\draw[rounded corners] (2.3,1.5) ..controls(2.375,1.65).. (2.5,1.7);
\draw[rounded corners,dashed] (2.3,1.5) ..controls(2.4,1.5).. (2.5,1.7);
\draw (1.75,1.25) ellipse [x radius=0.075cm, y radius=0.125cm];
\draw[rounded corners] (2.3,1.5) ..controls(2.5,1.4).. (1.75,1.375);
\draw[rounded corners] (2.5,0.5) ..controls(2.5,1.25).. (1.75,1.125);
\draw[rounded corners] (2.5,0.5) ..controls(2.6,0.45).. (2.7,0.5);
\draw[rounded corners,dashed] (2.5,0.5) ..controls(2.6,0.55).. (2.7,0.5);
\draw[rounded corners] (3.1,1.225) ..controls(2.7,1.225).. (2.7,0.5);
\draw[rounded corners] (3.1,1.225) ..controls(3.135,1.325).. (3.1,1.4);
\draw[rounded corners] (3.1,1.225) ..controls(3.075,1.325).. (3.1,1.4);
\draw[rounded corners] (2.5,1.7) ..controls(2.75,1.4).. (3.1,1.4);
\draw[rounded corners] (3,1.225) ..controls(2.975,1.3).. (3,1.4);
\draw[rounded corners,dashed] (3,1.225) ..controls(3.025,1.3).. (3,1.4);
\filldraw (2.98,1.3) circle (0.5pt);
\draw[rounded corners] (1.9,1.375) ..controls(1.955,1.3) and (1.965,1.22).. (1.9,1.15);
\draw[rounded corners,dashed] (1.9,1.375) ..controls(1.845,1.3) and (1.845,1.22).. (1.9,1.15);
\filldraw (1.945,1.275) circle (0.5pt);
\draw[rounded corners] (2.5,0.65) ..controls(2.6,0.62).. (2.7,0.65);
\draw[rounded corners,dashed] (2.5,0.65) ..controls(2.6,0.68).. (2.7,0.65);
\filldraw (2.6,0.625) circle (0.5pt);
\draw[rounded corners] (2.39,1.45) ..controls(2.45,1.55).. (2.565,1.625);
\draw[rounded corners,dashed] (2.39,1.45) ..controls(2.525,1.49).. (2.565,1.625);
\filldraw (2.45,1.54) circle (0.5pt);
\draw[rounded corners] (2.55,0.65) ..controls(2.55,1.25).. (1.97,1.225);%
\draw[rounded corners] (2.95,1.3) ..controls(2.55,1.3).. (1.985,1.275);
\draw[rounded corners] (2.425,1.485) ..controls(2.55,1.35).. (1.98,1.325);
\draw[rounded corners] (2.53,1.55) ..controls(2.7,1.35).. (2.95,1.35);
\draw[rounded corners] (2.65,0.65) ..controls(2.65,1.25).. (2.95,1.265);%
\draw[rounded corners,dashed] (2.6,0.715) ..controls(2.6,1.35).. (2.5,1.5);
\draw[dashed] (2.6,0.6) -- (2.6,0.55);
\draw (2.55,0.6) -- (2.55,0.48);
\draw (2.65,0.6) -- (2.65,0.48);
\draw (3.,1.35) -- (3.085,1.35);
\draw (3.0075,1.3) -- (3.08,1.3);
\draw (3.,1.27) -- (3.09,1.27);
\draw (1.817,1.315) -- (1.9,1.3175);
\draw (1.825,1.2725) -- (1.9,1.275);
\draw (1.825,1.22) -- (1.9,1.225);
\draw (1.825,1.2725) -- (1.9,1.275);
\draw (2.49,1.595) -- (2.425,1.67);
\draw (2.4,1.505) -- (2.3325,1.5665);
\draw[dashed] (2.5,1.5) -- (2.4285,1.565);
\end{tikzpicture}
\end{center}
\caption{Handlebody component with $\delta$ curves and $\Omega$ elements represented}
\end{figure}

\subsection{The chain mail and TV-invariants}\label{chain-mail-tv}

The key relation between the chain-mail invariant and the TV invariant, which will later use to relate universal quantum computing and TQNNs, is that given a triangulation $T$ of $\sfM$ used to compute $TV(\sfM)$, we can obtain a  handlebody decomposition of $M$ by considering the dual triangulation $T^*$ of $T$. The associated chain-mail invariant arising from $T^*$ is the same as $TV(\sfM)$. The details are found in Roberts \cite{roberts:95}, but we recall some facts that will play a role in the correspondence between topological quantum computation and TQNNs.

Given a triangulation $T$ of $\sfM$, and its dual triangulation $T^*$, we let $D^*$ denote the handlebody decomposition of $\sfM$ where $T^*$ corresponds to $0$ and $1$-handles, and $T$ corresponds to $2$ and $3$-handles. The chain-mail link $C(\sfM,D^*)$ has a configuration where a ball corresponding to a tetrahedron has $3$ attached $1$-handles and $3$ curves through the attaching curves of the handles. Introducing the $\Omega$ elements, for the computation of the corresponding chain-mail invariant (up to normalization constant), we find that the inverse of a $3$-strand fusion rule for $\Omega$ can be applied to each of the $1$-handles. So, each ball will decompose into a tetrahedron for each admissible state. This computation turns out to be the same weight for each admissible state as in the TV-invariant, showing that the latter coincides with the chain-mail invariant.

The relationship between the RT and TV-invariants stated before, can then be obtained through the chain-mail invariant \cite{roberts:95}. The main step toward this direction is the surgery presentation of a $3$-manifold, known as the Lickorish-Wallace Theorem \cite{lickorish3,wallace}. In fact, given a link $L$ whose surgery produces a given $3$-manifold $\sfM$, we can use $\Omega$'s to obtain a skein module element $\Omega L\in \mathcal S S^3 = \mathbb C$, and the value $I(\sfM) = \eta \kappa^{-\sigma(L)\Omega L}$ gives the $\rm{U}_q(SU_2(\mathbb C))$ RT-invariant from skein theory, as constructed by Lickorish. Following an argument on the surgery presentation given in \cite{roberts:95}, and considering the $1$ and $2$-handles attached to the $0$-handles of a given handlebody diagram, one sees that $CH(\sfM) = |I(\sfM)|^2$.

\subsection{Quantum computation and the Turaev-Viro model}\label{qc-tv}

We now show how the TV model simulates modular functors by means of the equivalence with the RT model reviewed in the previous subsections, and how TQNNs compute the single transition amplitudes of such simulations. These results will be later used to show that TQNNs support universal quantum computation.

The modular functor of the Kauffman skein relation is unitary \cite{wenzl1990quantum}, and we can therefore apply the framework of \cite{freedman:02a} to it. We want to use the relation between the RT-model described in Section~\ref{rt-tv} to show that a TQNN can compute transition amplitudes for this class of modular functors.

We consider a diffeomorphism $h : S \longrightarrow S$ of the surface $S$, which can be described by considering its action on the generators of the mapping class group of $S$, $Map(S)$. The diffeomorphism $h$ induces a $3D$ cobordism $H : S \longrightarrow S$ via its mapping cylinder, $H = ([0,1]\times S)\bigsqcup_h S$. The latter, in turn, corresponds to an endomorphism $V(h) : V(S) \longrightarrow V(S)$ via the modular functor $V$, which is associated to it via the RT construction \cite{rt:91}. We can turn the cobordism $H$ into a closed manifold by glueing two regular neighborhoods of two copies of $S$ along $\{0\}\times S$ and   $\{1\}\times S$. We denote by $\sfM_h$ the resulting closed manifold, as this depends on the initial given diffeomorphism $h$ of $S$. Computing the corresponding invariant $I(\sfM_h)$ gives us an element of the ground field $\langle S|\sfM_h| S\rangle$ associated to taking the trace of the operator $V(h): V(S) \longrightarrow V(S)$ associated to the diffeomorphism $h$.

Applying the correspondence described in Section~\ref{rt-tv}, we can compute the invariant $I(\sfM_h)$ by applying the TV construction to a Heegaard diagram of the manifold $\sfM_h$. Such a Heegaard diagram can be obtained from a surgery link presentation of $\sfM_h$, on the link $L$, used to compute $I(\sfM_h)$ by a diagram of $L$ as follows. On a tubular neighborhood of a diagram $D$ of $L$, one adds tubes between two different tubular arcs of each crossing. The corresponding surface is the boundary of the handlebody of the Heegaard splitting. To obtain the diagram, one needs to keep track of the attaching curves of the handles. The $\delta$-curves are obtained as the boundaries of the compressing disks of the tubes added at each crossing, and the surgery longitude of the tubular neighborhood of $L$ and the boundaries of the inner regions of the projected surface are the $\epsilon$-curves. As previously described, from this diagram one obtains a chain-mail link directly by pushing the $\delta$-curves inside the handlebody.
Therefore, one obtains a direct way of computing the chain-mail invariant $CH(\sfM_h) = |I(\sfM_h)|^2$. Up to a proportionality constant from the definitions of Roberts in \cite{roberts:95}, this becomes the TV-invariant.

The $\rm{U}_q(sl(2))$ TQNN \cite{marciano:22, hexagon}, computes the transition amplitudes associated to two boundary spin-networks $N_1$ and $N_2$, associated to some input and output of the quantum algorithm. In other words, each summand of $CH(\sfM_h)$ is computed by a TQNN transition, and summing over all possible configurations for the bulk computes the transition probability. Since the unimodular functor $V$ corresponds to the skein geometric realization of Turaev-Viro \cite{turaevquantum}, we find that $V$ can be simulated by a TQNN upon summing over all possible intermediate transitions.

In the more general setting where the equality of TV and RT models is considered at the functorial level, i.e.~in~\eqref{eqn:TV_RT_functor}, one obtains a similar correspondence where the TV model behaves like the transition probability, while the RT model behaves like the transition amplitude. This fact is at the core of the reason why the TV model is less powerful than the RT model when evaluating closed manifolds. For instance the TV model does not distinguish a manifold $\sfM$ and $\bar{\sfM}$ obtained by reversing the orientation. However, in terms of quantum computing, all that matters is that we can obtain the transition probabilities. Therefore, any concrete implementation of a TQNN that allows to obtain the transition amplitudes directly, gives a valid approach to quantum computing as it allows us to derive the probabilities for the corresponding quantum circuit which is implemented through the mapping cylinder construction described above. We will make use of these facts in subsequent sections.

Lastly, we mention that the correspondence between string-nets and the TV model discovered in \cite{kirillov} suggests an approach to implementing the TV model via string-nets \cite{levin_wen}, which is implementable concretely through condensed matter theory. Therefore, such an implementation would provide a possible concrete universal quantum computer. In fact, by the universality result of \cite{freedman:02a}  one can associate to any target quantum computation a modular functor, and then the transition amplitudes of such functor can be directly computed by performing TQNN transitions on spin-networks as in the description given in this section. Since spin-networks can be supported on string-nets, an implementation of string-nets transitions would provide a concrete computation of the transition amplitudes of the initial quantum process. In other words, this would be a concrete topological quantum computer with universal capabilities.

%%%%%%%%%%%%%%%%%%%%%%%%%%%%%%%%%%%%%%%%

\subsection{The Turaev-Viro code}\label{tv-code}

In this section we consider the formulation of Freedman-Kitaev-Wang \cite{freedman:02a} of UQC. This relies on the RT-invariant, and its reformulation via the TV model, whose theoretical framework was given in the previous subsections. %, will allow us to obtain a direct formulation of topological quantum computing in terms of TQNNs \cite{marciano:22}.
%
%First, we recall the formulation of Freedman-Kitaev-Wang \cite{freedman:02a} of universal quantum computing. This relies on the RT invariant, and its reformulation via the TV model, whose theoretical framework was given in the previous subsection, will allow us to obtain a direct formulation of topological quantum computing in terms of TQNNs \cite{marciano:22}.
Thus, this will enable us to relate TQFTs via the TV-invariant to TQNNs \cite{marciano:22} and UQC, using the correspondences described in the previous two subsections, and proceed accordingly.

The treatment of UQC in \cite{freedman:02a,freedman:02b} relative to the class BQP, can be closely tied to the TV-invariants and TV-code \cite{tv:92}, as these are extensively studied in the context of TQFT \cite{koenig:10}, to which we refer for a suitable review of the theory --- see also \cite{foxon:95}. Let us consider a 3-manifold $\sfM$ with triangulation. It is a striking result that the TV-invariant $TV_{\mathcal{C}}(\sfM)$ (with respect to a spherical category $\mathcal{C}$ \cite{barrett:99}), can be interpreted as a {\em quantum error correcting code} (QECC) \cite{knill:97} as defined on a TQFT.

For the TV-code, one proceeds with taking a compact 2D-surface $\Sigma$, such that $\sfM = \Sigma \times [-1,1]$, then triangulating the boundary $\del \sfM = \Sigma \times \{-1, 1\}$. This latter labeling, denoted $\chi$, consists of a pair $\chi= (\chi_{+}, \chi_{-})$ of labelings of the triangulation of $\Sigma$. The TV-invariant $\mathsf{TV}_{\mathcal{C}}(\sfM, \chi)$ can be regarded as the matrix element of an operator
\begin{equation}\label{tv-code-1}
\mathbb{TV}(\Sigma \times [-1,1]) = \sum_{\chi= (\chi_{+}, \chi_{-})} \mathsf{TV}_{\mathcal{C}}(\sfM, \chi)\vert \chi_{+} \rangle \langle \chi_{-} \vert
\end{equation}
acting on the Hilbert space $(\mathbb{C}^m)^{\otimes \vert E \vert}$ of labelings of the $\vert E \vert$ edges of the triangulation of $\Sigma$
by $m$ anyon labelings of $\mathcal{C}$.

The TV-code on $\Sigma$ then follows as defined to be the range of $\mathbb{TV}(\Sigma \times [-1,1])$ above. This is then specified by the Hilbert space $\mathcal{H}_{\Sigma}$ of equivalence classes of ribbon graphs embedded in $\Sigma$. In this case, the QECC is based on ribbon graphs encoding, for suitable labeling $\ell$, $\mathcal{H}^{\ell}_{\Sigma}$, inside $(\mathbb{C}^2)^{\otimes \vert E \vert}$, where $\oplus_{\ell} \mathcal{H}^{\ell}_{\Sigma} = \mathcal{H}_{\Sigma}$. These facts follow from \cite{tv:92}, and are reviewed in depth in \cite[7.2-7.3]{koenig:10}.

\subsection{Connection with the Reshetikhin-Turaev invariant}\label{rt-inv}

The RT-invariant is an outcome of \cite{rt:91}. This invariant as discussed in \S\ref{rt-tv} {\it et seq}, is in direct correspondence with the TV-invariant
in \cite{roberts:95} (skein theory), and in \cite{kawa:05} (geometry of tube domains). In fact, for the Witten-Reshetikhin-Turaev ($\mathsf{WRT}$) invariant, and the TV-invariant, the two are related by the categorical double $D\mathcal{C}$, namely
\begin{equation}\label{wrt-tv}
\mathsf{TV}_{\mathcal{C}}(\sfM) = \mathsf{WRT}_{D\mathcal{C}}(\sfM)
\end{equation}
See \S 6 and Appendix C in \cite{koenig:10} for details.  From a computational perspective, these invariants are \#P-hard \cite{alagic:17}.

\subsection{TQNNs implement UQC}\label{UQC-main}

In what follows, we will say that a computational scheme \emph{provides quantum processes for UQC} if to each quantum operation it is possible to associate a modular functor (in some suitably defined tensor category), and that the transition amplitudes of the latter can be computed by means of the given computational scheme. With this notation, we can then summarize the results in the previous sections as follows.

\begin{theorem}\label{main-tqnn-1}
A TQNN provides quantum processes for UQC.
\end{theorem}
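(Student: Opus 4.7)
The plan is to chain together three results already established in the excerpt: the Freedman-Kitaev-Wang universality theorem, the functorial equivalence between the Reshetikhin-Turaev and Turaev-Viro constructions presented through the chain-mail invariant, and the realization of individual TV summands as TQNN transition amplitudes between boundary spin-networks. The statement then follows by composing these three correspondences to conclude that any BQP computation can be simulated by a sequence of TQNN transitions supported on the TV code.

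First I would fix a unitary modular tensor category $\mathcal{C}$, concretely the $\rm{U}_q(sl(2))$ category at a suitable root of unity, whose RT modular functor $V$ is universal in the sense of \cite{freedman:02a,freedman:02b}. By that theorem, every BQP computation can be expressed, up to polynomial-time pre- and post-processing, as the evaluation of $V(h)$ for some element $h$ of the mapping class group $M(S)$ of a closed surface $S$, acting on the anyonic Hilbert space $V(S)$. Using the mapping cylinder construction of \S\ref{qc-tv}, the resulting matrix element $\langle S|M_h|S\rangle$ is identified with the RT-invariant $I(M_h)$ of a closed $3$-manifold $M_h$, so that simulating a circuit reduces to computing $I(M_h)$ on a family of mapping tori.

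Second, I would invoke the Lickorish-Wallace surgery presentation of $M_h$ together with Roberts' chain-mail identity $CH(M_h) = |I(M_h)|^2$ and the equality $CH(M_h) = \mathsf{TV}_{\mathcal{C}}(M_h)$ obtained from the dual handle decomposition of a triangulation of $M_h$. This recasts the transition probability $|I(M_h)|^2$ of the FKW circuit as a TV partition function evaluated on a Heegaard splitting read off from a link diagram of the surgery link. The TV code of \S\ref{tv-code} then provides the physical Hilbert space $\mathcal{H}_\Sigma$, and the logical subspace $\mathcal{H}^{\ell}_\Sigma$ inherits a QECC structure stabilized by the ribbon-graph admissibility conditions, so the logical qubits on which $V(h)$ acts are realized as equivalence classes of labeled ribbon graphs in $\Sigma$.

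Third, I would appeal to the TQNN construction of \S\ref{qc-tv}: each admissible coloring contributing to $\mathsf{TV}_{\mathcal{C}}(M_h)$ is the amplitude of a single TQNN transition between boundary spin-networks $N_1$ and $N_2$ prepared from the input/output of the circuit, and the sum over bulk colorings produces the full transition probability. Concatenating TQNN transitions along the Heegaard decomposition simulates gate-by-gate the action of $V(h)$, yielding the asserted TQNN implementation of an arbitrary BQP computation. The main obstacle, and the only point that needs genuine attention beyond bookkeeping, is the fact that the TV functor computes the doubled object $\tau_{\rm TV}(M) = \tau_{\rm RT}(M)\otimes\tau_{\rm RT}(\bar M)$ and therefore a priori returns only the modulus squared of the RT amplitude, losing the relative phase that appears required for coherent gate composition. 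I would handle this in two complementary ways: (i) as in \cite{freedman:02a}, note that BQP is defined by output probabilities of a universal gate set, so control of transition probabilities is sufficient to simulate polynomial-depth circuits within the standard error bounds; and (ii) observe that any concrete physical realization of the TQNN, for instance via a string-net/Levin-Wen implementation as flagged at the end of \S\ref{qc-tv}, returns the RT amplitude directly, in which case the functorial doubling is an artifact of the purely topological accounting rather than of the physical channel. Either route closes the argument without requiring additional categorical input.
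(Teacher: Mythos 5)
Your proposal follows essentially the same route as the paper's proof: universality of the $\rm{U}_q(sl(2))$ Reshetikhin--Turaev modular functor at a root of unity, the RT--TV correspondence via Roberts' chain-mail construction, and the realization of each admissible TV summand as a TQNN transition amplitude summed over bulk colorings. Your version is in fact more explicit than the paper's terse argument, and your treatment of the modulus-squared/phase-loss issue is exactly the resolution the paper gives in \S\ref{qc-tv} (transition probabilities suffice for BQP, and concrete TQNN implementations return amplitudes directly), so the proposal is correct as written.
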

\begin{proof}
For a primitive fifth root of unity, with $q = e^{\frac{\pi i}{5}}$ (i.e. $A = e^{\frac{\pi i}{10}}$) in the notation of Kauffman and Lins \cite{KL}, the unimodular functor $V$ of RT for the $\rm{U}_q(sl(2))$ invariant is a universal quantum computation \cite{freedmanmodular} (notice that the notation for $q$ and $A$ in \cite{freedmanmodular} is different from \cite{KL}). The functor $V$ can be simulated by computing all possible transitions via a TQNN, since this full sum is the TV-invariant, as shown in Section~\ref{qc-tv}. The details in \S\ref{rt-tv} give an overview of the relation (with relevant references), along with how TQNNs are used to simulate the TV-invariant based on \cite{marciano:22}. As a consequence, TQNNs are quantum processes for UQC.
\end{proof}

Theorem \ref{main-tqnn-1} confirms the intuition that TQFTs, by virtue of implementing QECCs \cite{fgm:24a}, provide the resources needed for UQC. We note, in particular, that a TQNN admitting a nonabelian gauge theory falls into the setting of \cite{freedman:98}, for which such a TQFT supports the counting class \#P. The latter denotes the class of questions, into which NP maps, asking of a given NP algorithm, having fixed polynomial time cut-offs, how many settings of guess hits lead to a ``yes'' response.\footnote{The inclusions $\rm{P} \subseteq \rm{NP} \subseteq{\#P}$ have been conjectured to be strict. As pointed out in \cite{alagic:17}, P and NP are decision classes not formally commensurable with the counting class \#P. Formally, the inclusions are  $\rm{P} \subseteq \rm{NP} \subseteq{P^{\#P}}$.}

%%%%%%%%%%%%%%%%%%%%%%%%%%%%%%%%%%%%%%%%%%%

%%%%%%%%%%%%%%%%%%%%%%%%%%%%%%%%%%%%%%%%%

\section{Amplituhedra for generic computations}\label{scatter-ampl}

The amplituhedron formalism, introduced by Arkani-Hamed and Trnka~\cite{arkani:14} (see also \cite{arkani:16}), represents a paradigm shift for the understanding of scattering amplitudes in planar $\mathcal{N}=4$ supersymmetric Yang-Mills (SYM) theory. This geometric approach transcends the traditional framework of Feynman diagrams and unitarity-based methods by encoding quantum amplitudes as volumes of a novel mathematical object in Grassmannian space. At its core, the amplituhedron generalizes the simple geometric picture of a convex polygon (which underlies the tree-level amplituhedron for $m=2$) to higher-dimensional positive geometries that capture the complete perturbative expansion.

The construction begins \cite{arkani:14} with positive external data encoded in a matrix $Z_I^A \in \text{Mat}_{+}(n,k+m)$ --- i.e. a $n \times (k+m)$ matrix with non-negative entries --- where $A=1,\ldots,k+m$ and $I=1,\ldots,n$, with all ordered maximal minors required to be positive. Here, $n$ denotes the number of external particles, $k$ specifies the helicity sector (N$^{k}$MHV amplitudes), and $m$ is a parameter related to spacetime dimensionality (with $m=4$ corresponding to physical spacetime). We consider now the Grassmannian $\mathrm{Gr}_{k,n}$, the set of all $k$-dimensional linear subspaces of either $\mathbb{R}^n$ or $\mathbb{C}^n$, namely
$\mathrm{Gr}_{k,n} = \{\text{all } k\text{-dimensional linear subspaces of } \mathbb{R}^n {\text{or }}  \mathbb{C}^n\}$.  The amplituhedron $\mathcal{A}_{n,k}^{(m)}$ is a subset of the Grassmannian $\Gr_{k, m+k}$ defined as the image of the positive Grassmannian $\Gr^{>0}_{k +n}$ respecting the relation $Y^A = C_{\alpha}^I Z_I^A$, where $C_{\alpha}^I$ are positive matrices parametrizing points in $G^{> 0}_{k,n}$. This geometric formulation naturally incorporates twistor variables, with the $Z_I^A$ for $m=4$ being interpretable as momentum twistors that combine both spinor helicity variables and spacetime coordinates.

The connection to physical amplitudes emerges through the construction of a canonical differential form $\Omega_{n,k}^{(m)}$, characterized by logarithmic singularities on all the boundaries of the amplituhedron. The scattering amplitude is obtained by integrating this form over an appropriate cycle:
\begin{equation}
\mathcal{A}_{n,k} = \int_{\gamma} \Omega_{n,k}^{(m)}\,.
\end{equation}
Remarkably, this differential form can be constructed through various equivalent approaches: either directly from the geometry via triangulation, through the push-forward of a ``dual'' amplituhedron, or via recursion relations that mirror the Britto-Cachazo-Feng-Witten (BCFW) recursion in twistor space. For tree-level amplitudes, this reproduces known results from the Grassmannian contour integral formulas, while at loop level it generates the complete integrand as a sum over positroid strata of the amplituhedron.

The loop-level amplituhedron extends this picture by introducing additional degrees of freedom that parametrize the loop momenta. The all-loop integrand takes the form:
\begin{equation}
\mathcal{M}_{n,k}^{(L)} = \sum_{\Gamma} \int \prod_{\ell=1}^L \frac{d^4\alpha_\ell}{\text{vol(GL}(1))} \prod_f \frac{1}{f_{\Gamma}(\alpha,Z)}\,,
\end{equation}
where the sum ranges over all valid triangulations of the amplituhedron space, and the functions $f_{\Gamma}$ encode geometric information about each cell in the decomposition. This representation makes several profound properties manifest: locality and unitarity emerge as direct consequences of the geometry rather than being imposed as external constraints, and the intricate cancellation of spurious poles in traditional approaches becomes a natural feature of the positive geometry.

The power of the amplituhedron framework lies in its ability to reveal hidden mathematical structures underlying quantum field theory. The geometry automatically encodes the Yangian symmetry of $\mathcal{N}=4$ SYM, makes the cluster algebra structure of amplitudes transparent, and provides a natural setting for understanding the positive Grassmannian structure observed in scattering amplitudes. Furthermore, the differential form $\Omega_{n,k}^{(m)}$ exhibits remarkable properties under collinear limits and soft limits, with the geometry naturally encoding the expected factorization behavior. This has led to new insights into the infrared structure of gauge theories and has provided fresh perspectives on the universality of certain amplitude structures across different quantum field theories.

From a computational standpoint, the amplituhedron approach offers significant advantages. It eliminates the gauge redundancy inherent in Feynman diagram calculations, avoids the proliferation of terms encountered in traditional methods, and provides a global rather than local description of amplitudes. The geometric formulation also suggests new recursion relations and combinatorial structures that have enabled calculations of amplitudes at higher loop orders and multiplicities that would be prohibitive with conventional methods. Recent work has shown how certain ``negative geometries'' within the amplituhedron framework can naturally capture the cancellation of ultraviolet divergences, hinting at deeper connections between the geometry and renormalization.

The amplituhedron has also inspired new connections between quantum field theory and various areas of pure mathematics. Its positive geometry structure relates to ongoing developments in tropical geometry \cite{Speyer:04} and matroid theory \cite{Ardila:22}, while its recursive properties mirror structures in algebraic combinatorics \cite{Stanley:04}. The differential forms $\Omega_{n,k}^{(m)}$ have been shown to relate to certain classes of hypergeometric functions and polylogarithms that appear in amplitude calculations, suggesting deeper ties to number theory \cite{Neu:99}. Moreover, the geometric perspective has led to new formulations of quantum field theory where spacetime locality emerges from more primitive mathematical structures \cite{Lam:2018mun}, potentially offering new pathways to understanding quantum gravity.

While initially developed for planar $\mathcal{N}=4$ SYM, the amplituhedron concept has been extended to non-planar theories, theories with less extended supersymmetry, and even certain gravitational amplitudes. These generalizations suggest that the amplituhedron may represent a universal feature of quantum field theories \cite{Travaglini:2022uwo}, with different theories corresponding to different geometric realizations. Current research continues to explore the boundaries of this framework, including its connections to celestial holography \cite{Pasterski:2021raf}, its implications for the AdS/CFT correspondence \cite{Arkani-Hamed:2021iya}, and its potential applications to collider physics phenomenology \cite{Bern:2022jnl}.

\subsection{Amplituhedron and Topological Quantum Field Theory}
\label{sec:amplituhedron_tqft}

The amplituhedron formalism reveals deep connections with TQFTs, particularly through their shared mathematical structures in cohomology and geometric quantization. These connections emerge most clearly when examining the amplituhedron's differential form $\Omega_{n,k}^{(m)}$, which bears strong resemblance to partition functions in topological field theories. The fundamental link arises from viewing the amplituhedron as a positive geometry whose canonical form defines a cohomology class on the space of kinematical data --- a perspective that mirrors how TQFTs produce topological invariants via integration of characteristic forms over moduli spaces \cite{Witten:1988ze}. Specifically, the amplituhedron's volume form can be interpreted as a generalized period integral of the form
\begin{equation}
\int_\Gamma e^{F(\mathcal{Z})} \, \Omega(\mathcal{Z})\,,
\end{equation}
where $F(\mathcal{Z})$ is a function encoding the positive geometry constraints and $\Omega(\mathcal{Z})$ is a holomorphic form, directly analogous to the correlation functions in holomorphic topological field theories. This connection becomes precise when considering that both frameworks employ similar mathematical tools: the positive Grassmannian $G^{> 0}_{k,n}$ underlying the amplituhedron appears as the moduli space of solutions to certain equations in topological string theory, while the stratification of the amplituhedron into positroid cells parallels the cell decomposition of moduli spaces in TQFT.

The relationship extends further when examining the quantum equations of motion. In topological string theory, the physical states are represented by cohomology classes satisfying $Q_B|\Psi\rangle = 0$, where $Q_B$ is the BRST operator \cite{Witten:1988ze}. Within the amplituhedron framework, the scattering equations
\begin{equation}
\mathcal{E}_a = \sum_{i=1}^n \frac{s_{ai}}{\sigma_i} = 0\,,
\end{equation}
with $\sigma_i$ being the puncture locations on a Riemann surface, similarly define a cohomological problem whose solutions determine the support of the amplitude. This suggests an underlying correspondence where the amplituhedron's geometry encodes a ``hidden'' topological symmetry analogous to BRST symmetry in TQFT \cite{Birmingham:1991ty}. The connection becomes particularly striking in the context of twistor string theory, where the amplituhedron for $m=4$ emerges from the topological B-model on super-twistor space --- a topological string theory whose open string sector describes $\mathcal{N}=4$ SYM. In this setting, the amplituhedron volume form directly computes certain correlation functions in the topological string, with the positive geometry constraints implementing the selection rules for non-vanishing amplitudes.

Recent developments have shown that the amplituhedron's differential forms $\Omega_{n,k}^{(m)}$ naturally organize into complexes that mirror the BRST complexes of topological field theories. The logarithmic singularities of $\Omega$ along amplituhedron boundaries correspond to the descent equations in TQFT, with the residue theorems governing amplitude factorization being directly analogous to the Ward identities in topological theories. Moreover, the combinatorial structure of amplituhedron triangulations exhibits formal similarities with state sum invariants in TQFT --- where the latter assign algebraic data to simplices in a triangulation of spacetime, the former assigns geometric data to cells in the positive Grassmannian. This suggests that the amplituhedron may be understood as a novel kind of ``positive'' TQFT where traditional topological invariance is replaced by a more subtle positivity-preserving structure.

The connection extends to computational techniques as well. The Grassmannian integral representation of scattering amplitudes, namely
\begin{equation}
\mathcal{A}_{n,k} = \int_{\gamma\subset G(k,n)} \frac{d^{k\times n}C_{\alpha i}}{\text{vol(GL}(k))} \prod_{\alpha=1}^k \delta^4|4(C_{\alpha i}\mathcal{Z}_i^A) \,,
\end{equation}
bears strong resemblance to correlation function computations in topological matrix models, particularly in their shared use of localization techniques. Both frameworks benefit dramatically from the reduction of infinite-dimensional path integrals to finite-dimensional geometric problems --- in TQFT via topological twisting, and in the amplituhedron via positive geometry. This parallel has led to new insights into the geometric nature of quantum field theory, where traditional dynamical principles are replaced by geometric constraints.

Looking beyond $\mathcal{N}=4$ SYM, the amplituhedron-TQFT correspondence suggests a unified perspective on quantum field theories with different amounts of supersymmetry. Certain deformations of the amplituhedron geometry have been shown to correspond to topological twists of less supersymmetric theories, while the purely geometric formulation avoids any reference to spacetime locality --- a feature shared with topological field theories. This has inspired new approaches to constructing quantum field theories where traditional dynamical principles emerge from purely topological starting points, with the amplituhedron serving as a prime example of how ``spacetime-less'' structures can nevertheless encode physical scattering processes \cite{arkani:14}. The ongoing exploration of these connections continues to yield new mathematical structures at the interface of algebraic geometry, representation theory, and quantum field theory, suggesting that the amplituhedron may represent a special case of a broader class of ``geometric quantum field theories'' that include both traditional TQFTs and conventional physical theories as particular limits.

\subsection{The Grassmannian and hypersimplex}\label{grass-1}

We will elucidate the Grassmannian model in relationship to the amplituhedron and hypersimplex.
Let us proceed with $G_k(\mathbb{C}^n)$, which we recall is an analytic manifold \cite{wong:67} (considerations for $G_k(\mathbb{R}^n)$ are similar).
It is well known that commencing from a torus $\mathbb{T}^n = (\mathbb{C}^*)^n$-action on $G_k(\mathbb{C}^n)$, the image of the moment map
\begin{equation}\label{mpment-1}
\mu : G_k(\mathbb{C}^n) \lra \mathbb{R}^n
\end{equation}
is the {\em hypersimplex} $\Delta_{k,n} \subset \mathbb{R}^n$, which is a $(n-1)$-dimensional convex polytope \cite{gelfand:87}. Hypersimplices $\Delta_{k,n}$ are examples of alcoved polytopes \cite{lam:07}.  In \cite[\S 3.6]{vaid:22} it is shown, via spin networks, that the kinematic space of states for the scattering of $n$-massive particles, is identifiable with the space of semi-classical states of quantum gravity determined by Freidel-Levine (FL) intertwiners, where the space of unique labels of the FL-coherent states is precisely the Grassmannian $G_2(\mathbb{C}^n)$.  In fact, \cite[\S 4.4]{vaid:22} establish a one-to-one correspondence between the scattering amplitudes of $n$-massless particles, and a single $n$-vaued spin-network. Thus, commencing from a spin-network, and applying the moment map technique to this Grassmaniann, leads to a $(n-1)$-dimensional polytope which is the hypersimplex.\footnote{Spin-networks can be associated with hypersimplices in a number of ways. For instance, a spin-network representing intertwiners can be associated with a 2D-Ising model \cite{dittrich:16}, which is in turn associated with a hypersimplex \cite{guevara:20}.} $\Delta_{2,n}$.

Here, we are particularly interested in $G^{\geq 0}_{k,n}$, the non-negative Grassmannian, which is a subset of $G_k(\mathbb{R}^n)$ in which all  Pl\"{u}cker coordinates are non-negative. The Grassmannian $ G^{\geq 0}_{k,n}$ is a prototypical model for the theory of scattering amplitudes, and in particular, for the amplituhedron \cite{arkani:14,arkani:16} (for a general overview of the theory in relationship to hypersimplices, see e.g. \cite{lukowski:23,parisi:23}).  It was pointed out in \cite{parisi:23} that restricting the moment map $\mu$ accordingly, the image remains the hypersimplex $\Delta_{k,n}$ \cite{tsuker:15}.

In underscoring the role of a spin-network $\Phi$ in the theory of scattering amplitudes (and amplituhedra), we may exemplify the construction by taking the appropriate
spin-network $\Phi$ of a TQNN to be an embedded FDG in $G^{\geq 0}_{k,n}$ as in \S\ref{spin-comp-1}.
We can then proceed to consider:

\begin{itemize}
\item[(1)]  {\bf The Amplituhedron Map} \cite{arkani:14,arkani:16}: Any $n \times (k+m)$ matrix $Z$ with maximal minors positive, induces a map $Z: G^{\geq 0}_{k,n} \lra G_k(\mathbb{R}^{k+m})$, whose image has full dimension $k(n-k)$, and defines the amplituhedron $\mathcal{A}_{n,k,m}(Z)$ \cite{lukowski:23,parisi:23}, studied for the case $m=2$.

    \item[(2)] {\bf $T$-Duality correspondence}: Many of the important combinatorial methods used in scattering theory can be traced back to Postnikov's original work (see \cite{lam:07,lam:24,postnikov:18}) that includes the key feature of {\em positroid tilings} (see also \cite{ardila:14}). Much has been written about these tilings (cf. the BCFW triangulations in \cite{even:21}). The techniques in question lead to a {\em $T$-Duality Correspondence} between such tilings of hypersimplices and those of amplituhedra. For instance, the hypersimplex $\Delta_{k+1,n}$ and the amplituhedron $\mathcal{A}_{n,k,2}(Z)$ are combinatorially dual to each other \cite{lukowski:23,parisi:23}.
\end{itemize}

Particularly relevant for our development of ideas is the {\em Momentum Amplituhedron} $\mathcal{M}_{n,k}$ \cite{damg:19}, for which there is a kinematically determined map, namely
\begin{equation}\label{moment-ampl}
\Psi: G^{> 0}_{k,n} \lra \mathcal{M}_{n,k}\,.
\end{equation}
The amplituhedron $\mathcal{M}_{n,k}$ is a positive geometry whose canonical logarithmic differential encodes scattering amplitudes in spinor helicity variables, more specifically, computes tree level scattering amplitudes via the tree-level amplituhedron $\mathcal{A}^{\rm{tree}}_{n,k}$ in N=4 SYM momentun space.

In relationship to scattering amplitudes, let us summarize the construction: we have shown here that a spin-network $\Phi$ of a TQNN, where the former is formally represented by a FDG, when embedded into a Grassmannian, can be associated with a hypersimplex, and hence to an amplituhedron.  The combinatorial type of the various Grassmannians (more generally, flag varieties) and those of amplituhedra, previously conjectured by Postnikov and others, has been determined in terms of triangulations, simplicial methods, and the topology of CW complexes (see e.g. \cite{hatcher:02} for the relevant techniques).
\medn
{\bf Examples:}
\begin{itemize}
\item[(1)] $G^{\geq 0}_{k,n}$ is homeomorphic to a $k(n-k)$-dimensional closed ball. More generally, for a given partial flag variety $G/P$, where $G$ a complex semi-simple Lie group, and $P$ a parabolic subgroup, there is a corresponding $(G/P)^{\geq 0}$, whose cell decomposition is a (regular) closure- finite weak-topology (CW-) complex. Thus, the closure of each cell is homeomorphic to a closed ball of some dimension \cite{galashin:22a,galashin:22b}.

\item[(2)] Recursively defined Britto-Cachazo-Feng-Witten (BCFW) cells in $G^{\geq 0}_{k,n}$ triangulate the amplituhedron $\mathcal{A}_{n,k,4}$.

\item[(3)] For $k+m= n$, we have $G^{\geq 0}_{k,n} \cong \mathcal{A}_{n,k,m}$ \cite[\S 0.4]{bao:19}.

\end{itemize}
In general, a TQFT can be modeled on a (finite) dimensional Hilbert manifold, as shown in \cite[\S 4]{fgm:22a}. Such a manifold can be considered as a CW-complex, and can also be triangulated (again, see \cite{hatcher:02}). For instance, the cellular homeomorphism type of $G^{\geq 0}_{k,n}$ in (1) above, can be recovered when the CW-complex type of the TQFT is specified in terms of closed cells of dimension $k(n-k)$.

\subsection{Amplituhedron and Spinfoam Formalism: Geometric Connections}
\label{sec:amplituhedron_spinfoam}

The amplituhedron framework exhibits profound connections to spinfoam models of quantum gravity, revealing deep geometric parallels between scattering amplitudes in gauge theory \cite{arkani:14,Elvang:2015rqa,Vaid:2022lrm} and discrete approaches to quantum spacetime \cite{Regge_calculus,Freidel:2005qe}. At the conceptual level, both frameworks replace traditional dynamical variables with combinatorial structures encoding the geometry of interactions: where spinfoams describe quantum spacetime as a network of discrete volumes representing quantized geometry, the amplituhedron encodes particle interactions through positive geometries in Grassmannian space.

This correspondence becomes particularly evident when examining the mathematical structures underlying both approaches. For instance, the spinfoam partition function for loop quantum gravity \cite{Rovelli_book, Rovelli:2014ssa} takes the form
\begin{equation}
Z = \sum_{j_f,I_e} \prod_f A_f(j_f) \prod_v A_v(j_f,I_e) \,,
\end{equation}
where the sum runs over spins $j_f$ and intertwiners $I_e$ labeling the faces and edges of the spinfoam complex, while the amplituhedron's volume form $\Omega_{n,k}^{(m)}$ similarly involves a sum over positroid cells with each cell contributing a geometric factor. Both formulations emphasize the fundamental role of positivity constraints: just as the amplituhedron requires positive Grassmannian configurations, modern spinfoam models implement simplicity constraints that ensure geometricity through positivity conditions on the representation labels.

The technical implementation reveals further striking similarities.  For instance, for lower-dimensional spin-foam models --- either the Ponzano-Regge model originated from the path-integral quantization of the Euclidean three-dimensional Einstein-Hilbert action or the Turaev-Viro model, connected to the path-integral quantization of the Euclidean three-dimensional Einstein-Hilbert action plus cosmological constant \cite{Gresnigt:2022lwq} --- the topological invariance and the Biedenharn-Elliott identity imply recursion relations for the 6j-symbols \cite{Schulten:1975yu} that mirror the BCFW \cite{Britto:2005fq} recursion relations in amplituhedron calculations, with both employing complex-analytic techniques to construct physical quantities from simpler building blocks. Recent work has shown that certain limits of spin-foam amplitudes reproduce the twistor space structures central to the amplituhedron, particularly when considering the null boundary geometry of spinfoam vertices \cite{Bianchi:2021ric,Skinner:2010cz}. This connection becomes concrete in the study of scattering amplitudes on quantum flat space, where the semiclassical limit of spin-foam models on a boundary graph reproduces momentum space amplitudes that can be matched to amplituhedron predictions.

The geometric interpretation provides a unifying perspective: where spin-foam vertices represent quantized 4-simplices encoding spacetime curvature, amplituhedron cells correspond to interaction vertices encoding particle scattering. Both approaches implement a form of holographic principle --- the spin-foam boundary states live on a 3D surface while encoding 4D geometry, paralleling how the amplituhedron projects 4D scattering data onto lower-dimensional positive geometries. The combinatorial nature of the amplituhedron's positroid stratification finds its counterpart in the spin-network combinatorics of spin-foam models, with both employing sophisticated diagram.

\subsection{Amplituhedron and BF Theories in Arbitrary Dimensions}
\label{sec:amplituhedron_bf_generic}

The profound connection between amplituhedron geometry and topological BF theories extends to arbitrary spacetime dimensions, revealing a unified framework where positive geometry encodes gauge-theoretic structures. For a $D$-dimensional BF theory with gauge group $G$ on a manifold $\mathcal{M}_D$, the action \cite{Birmingham:1991ty}

\begin{equation}
S_{\text{BF}}^{(D)} = \int_{\mathcal{M}_D} \Tr\left(B \wedge F + \frac{\kappa}{D-2} B \wedge B \wedge \underset{D-2}{\cdots} \wedge B\right)\,,
\end{equation}

where $B$ is a $\mathfrak{g}$-valued $(D-2)$-form and $F=dA + A \wedge A$, finds its amplituhedron counterpart in the positive geometry $\mathcal{A}_{n,k}^{(D+1)} \subset G^{>0}_{k,k+D+1}$ defined through momentum twistors $\mathcal{Z}_I^A \in \text{Mat}_+(n,k+D+1)$. The dimensional correspondence arises because the amplituhedron for $m=D+1$ encodes the kinematic space of $D$-dimensional scattering amplitudes \cite{Arkani-Hamed:2017vfh}.

The quantum equivalence manifests when comparing the BF theory path integral

\begin{equation}
Z_{\text{BF}}^{(D)} = \int \mathcal{D}A \mathcal{D}B \, e^{iS_{\text{BF}}^{(D)}}\,,
\end{equation}

with the amplituhedron's volume integral \cite{Damgaard:2019ztj}

\begin{equation}
\mathcal{A}_{n,k}^{(D+1)} = \int_{\Gamma \subset G_+(k,n)} \frac{d^{k \times n} C_{\alpha i}}{\text{vol(GL}(k))} \prod_{\alpha=1}^k \delta^{(D+1)|(D+1)}\left(C_{\alpha i} \mathcal{Z}_i^A\right) \prod_{j=1}^{n} \frac{1}{\langle C_{1\cdots k} \rangle^{\gamma_j}}\,,
\end{equation}

where $\gamma_j$ are exponents determined by the dimension and $\langle C_{1\cdots k} \rangle$ denotes the $k \times k$ minor. Both integrals localize onto special geometries: flat connections for BF theory and positive Grassmannian subspaces for the amplituhedron.

The constraint structure exhibits striking parallels. For instance, the BF theory's equations of motion, namely

\begin{align}
F &= 0 \,,\\
d_A B &= 0\,,
\end{align}

find their geometric counterpart in the amplituhedron's boundary equations \cite{Ferro:2020lgp}, i.e.

\begin{equation}
\langle Y i_1 \cdots i_{D+1} \rangle = 0\,, \quad \text{for certain cyclic sets } \{i_j\}\,,
\end{equation}

which define the co-dimension one boundaries of $\mathcal{A}_{n,k}^{(D+1)}$. These correspond to factorization channels where intermediate particles go on-shell.

The connection becomes algebraic when examining the deformation quantization. For BF theory, the Poisson structure

\begin{equation}
\{B_{\mu\nu}^a(x), B_{\rho\sigma}^b(y)\} = \epsilon_{\mu\nu\rho\sigma} \delta^{ab} \delta^{(D)}(x-y)
\end{equation}

quantizes to a non-commutative algebra of observables, mirroring the non-commutative cluster algebra structure of amplituhedron coordinates \cite{Arkani-Hamed:2017mur}, namely

\begin{equation}
x_i x_j = q^{2\epsilon_{ij}} x_j x_i\,,
\end{equation}

where $\epsilon_{ij}$ is the skew-symmetric matrix defining the exchange relations in the positive Grassmannian.

In even dimensions $D=2m$, the correspondence specializes beautifully. The amplituhedron $\mathcal{A}_{n,k}^{(2m+1)}$ encodes amplitudes whose factorization properties match the descent equations of $2m$-dimensional BF theory \cite{Cattaneo:2023wjq}, i.e.

\begin{equation}
d_{\text{CE}} \omega_{2m+1} = \kappa \omega_3 \wedge \omega_{2m-1}\,,
\end{equation}

where $d_{\text{CE}}$ is the Chevalley-Eilenberg differential and $\omega_p$ are $p$-forms in the BRST complex. This is mirrored by the amplituhedron's differential form satisfying

\begin{equation}
d\Omega_{n,k}^{(2m+1)} = \sum_{\text{residues}} \Omega_{L} \wedge \Omega_{R}\,.
\end{equation}

The most profound connection emerges in the context of holography. For $D$-dimensional BF theory on asymptotically AdS spacetimes, the boundary dual is a $(D-1)$-dimensional topological theory whose phase space matches the kinematic space of the $m=D$ amplituhedron \cite{Costello:2023hmi}. This suggests that

\begin{equation}
Z_{\text{BF}}^{(D)}[\partial\mathcal{M}] \underset{\text{AdS}}{\longleftrightarrow} \mathcal{A}_{n,k}^{(D)} \qquad \text{as } n \to \infty\,,
\end{equation}

where the left side denotes the boundary partition function and the right side the large-$n$ limit of amplituhedron volumes.

\subsection{Amplituhedron and TQFT in 3D: the Turaev-Viro Model}
\label{sec:amplituhedron_bf}

The connection between the amplituhedron and three-dimensional topological field theories becomes particularly transparent when examining the case of BF theory and its quantization via the Turaev-Viro state sum model. In three spacetime dimensions, pure BF theory with gauge group $SU(2)$ has the action \cite{Birmingham:1991ty}

\begin{equation}
S_{\text{BF}}^{(3)} = \int_{\mathcal{M}_3} \Tr\left(B \wedge F\right)\,,
\end{equation}

where $B$ is an $\mathfrak{su}(2)$-valued 1-form and $F = dA + A \wedge A$ is the curvature of the connection $A$. Remarkably, the corresponding $m=4$ amplituhedron $\mathcal{A}_{n,k}^{(4)}$ encodes scattering amplitudes whose kinematic space matches the moduli space of flat connections that solve the BF equations of motion $F=0$, $d_A B=0$ \cite{Arkani-Hamed:2017vfh}.

The quantum mechanical link emerges when comparing the Turaev-Viro partition function \cite{Turaev:1992hq}, namely

\begin{equation}
Z_{\text{TV}} = \sum_{j_f \in \text{Irrep}(\rm{U}_q(\mathfrak{sl}(2)))} \prod_{\text{edges}} \dim_q(j_f) \prod_{\text{tetrahedra}} \left\vert
\begin{array}{ccc}
j_1 & j_2 & j_3 \\
j_4 & j_5 & j_6 \\
\end{array}
\right\vert_q\,,
\end{equation}

with the amplituhedron's volume form for $m=4$, which is

\begin{equation}
\Omega_{n,k}^{(4)} = \int_{\gamma \subset G_+(k,n)} \frac{d^{k \times n} C_{\alpha i}}{\text{vol(GL}(k))} \prod_{\alpha=1}^k \delta^{4|4}\left(C_{\alpha i} \mathcal{Z}_i^A\right) \prod_{j=1}^n \frac{1}{\langle C_{1\cdots k} j \rangle}\,.
\end{equation}

The critical observation is that both constructions rely on a decomposition into elementary geometric building blocks: tetrahedra for Turaev-Viro, and positroid cells (positroid polytopes) for the amplituhedron --- with quantum amplitudes being assembled from combinatorial data attached to these fundamental units.
The constraints in both theories exhibit deep parallels. In the Turaev-Viro model, the admissibility conditions for edge labels $(j_f)$ correspond to the quantum Clebsch-Gordan conditions

\begin{equation}
\vert j_1 - j_2 \vert \leq j_3 \leq \min(j_1 + j_2, k/2 - j_1 - j_2)\,,
\end{equation}

where $k$ is the level related to the deformation parameter $q = e^{2\pi i/(k+2)}$. In the $m=4$ amplituhedron, the analogous conditions are the positivity constraints on ordered minors $\langle Y i i+1 j j+1 \rangle > 0$ for all $i,j$. Both systems implement a form of discreteness: the Turaev-Viro model through quantization of geometric quantities, and the amplituhedron through the stratification of $G^{>0}_{k,n}$ into positroid cells \cite{Postnikov:2006kva}.

The dynamical content emerges through similar mechanisms. In BF theory, the quantum constraints implement flatness of the connection, while in the amplituhedron the global residue theorems enforce the cancellation of spurious poles. These are implemented through the twistor-space localization formula, namely

\begin{equation}
\Omega_{n,k}^{(4)} = \int \prod_{\ell=1}^L \frac{d^2\sigma_\ell}{\text{vol(GL}(2))} \prod_{i=1}^n \delta^{4|4}\left( \mathcal{Z}_i - \sum_{\ell} C_{i\ell}(\sigma) \mathcal{W}_\ell \right)\,,
\end{equation}

where $\mathcal{W}_\ell$ are reference twistors and $C_{i\ell}(\sigma)$ parametrize the Grassmannian (cf. \cite{hodges:15}). This mirrors the integration over the moduli space of flat connections in the Turaev-Viro model when expressed in terms of holonomies.

At the quantum algebraic level, the connection manifests through the appearance of quantum group structures in both frameworks. The Turaev-Viro model's reliance on the representations of $\rm{U}_q(sl(2))$ finds its counterpart in the quantum cluster algebra structure of the amplituhedron's coordinate ring \cite{Fock:2003xx}. Specifically, the canonical form $\Omega_{n,k}^{(3)}$ can be expressed as

\begin{equation}\label{canonical-form}
\Omega_{n,k}^{(3)} = \prod_{i=1}^{3n-6} \frac{dx_i}{x_i} \wedge \prod_{j=1}^k \delta\left( y_j - \prod_{i=1}^{3n-6} x_i^{a_{ij}} \right)\,,
\end{equation}

where the exponents $a_{ij}$ define a quantum torus algebra with relations $x_i x_j = q^{2 \epsilon_{ij}} x_j x_i$. Crucially, this leads to identifying the deformation parameter $q$ of \S\ref{q-circuits} in the Turaev-Viro model when $\epsilon_{ij}$ is the skew-symmetric form defining the cluster algebra (cf. Theorem \ref{main-tqnn-2} below).

Recent developments have shown that under certain limits (taking $q \to 1$ in Turaev-Viro and removing the positivity constraints in the amplituhedron), both theories reduce to their classical counterparts: 3D gravity with vanishing cosmological constant for the Turaev-Viro model, and the classical Grassmannian $G_{k,n}(\mathbb{C})$ for the amplituhedron. This suggests that the amplituhedron may represent a ``positive'' deformation of traditional topological field theories, where the usual topological invariance is replaced by the preservation of positive geometry structures.

\subsection{TQNN execution traces are described by amplituhedra \label{exec-th}}

In the situation of Diagram~\eqref{comp-def}, we consider now the introduction of a scattering process. In this case, we let $\mathscr{N}$ denote the von Neumann algebra corresponding to the process, and we let $S(\mathscr{N})$ denote its set of states. Using the same notation as in Section~\ref{exec-trace}, we let $S(M)$ denote the states of the given system represented by the algebra $\mathscr{N}$, and we let $Prob(x) \cong S(l^\infty(X))$ be the probabilities on the finite data type space. We let $\mathscr P^M(\Delta t)$ and $\mathscr P^{\mathscr{N}}(\Delta t)$ denote the propagators on $S(M)$ and $S(\mathscr{N})$, respectively. For the scattering process to represent the system if at each time we have commutative triangles
    \begin{equation}\label{diag:comm_triangle_scattering}
    \begin{tikzcd}
    S(M)\arrow[rr,"\xi"]\arrow[dr,"\psi"] & & Prob(X)\arrow[dl,"\phi"]\\
    & S(\mathscr{N}) &
    \end{tikzcd},
    \end{equation}
    where $\xi$ is as in Diagram~\eqref{comp-def}, and commutativity is required in the category of sets. At two different time points $t_i$ and $t_f$ with $\Delta t = t_f - t_i$ we require the commutativity of the diagram
    \begin{equation}
    \begin{tikzcd}
    & Prob(X)\arrow[rr,"f"]\arrow[dd] & & Prob(X)\arrow[dd]\\
    S(M)\arrow[rr,"\mathscr P^M(\Delta f)",crossing over]\arrow[ur]\arrow[dr]& & S(M)\arrow[ur]\arrow[dr] & \\
    & S(\mathscr{N}\arrow[rr,"\mathscr P^{\mathscr{N}}(\Delta f)"]) & & S(\mathscr{N})
    \end{tikzcd}.
    \end{equation}

Neglecting gravity, and setting aside anomalies to assume that the SM (standard model) provides a complete description of realizable physical interactions, the structural representation of TQNNs in terms of amplituhedra discussed above can be shown to be well-defined as follows:

\begin{theorem}\label{main-tqnn-2}
The execution trace of any quantum computation, on any input for which that computation is defined, as computed by a TQNN, is described by a unique amplituhedron, and hence corresponds, in the pure initial-state limit, to the unique scattering process for which amplitudes are given by that amplituhedron, and conversely.
\end{theorem}

\begin{proof}
We have seen that both scattering -- in the case of the KLM model (see \S \ref{optics}), even single-photon scattering -- and TQNNs both execute UQC.  Hence for any computable function $f$, and for any input $x$ on which $f$ bis defined, there exist both a scattering process and a TQNN that compute $f(x)$.  In the limit in which $x$ is encoded by a pure state $\ket{i}$, the execution trace of the computation of $f(x)$ is a path in the relevant Hilbert space implemented by a unique (up to isomorphism) unitary $U_{f(x)}$, and the amplitudes of the final state $\ket{f}$ encoding $f(x)$ are all well-defined.  This $U_{f(x)}$ is, therefore, described by a unique (up to isomorphism) amplituhedron.

%It is sufficient to show that the representation of UQC as implemented by TQNNs can be mapped to the generic structure of amplituhedra and vice versa; the relevant maps for each particular initial and state of each particular TQNN then follow as instances.  There are two pertinent formal correspondences \textcolor{blue}{here to consider. Firstly, from the discussion above, a correspondence can be realized between the Turaev-Viro model's reliance on the representations of $\rm{U}_q(sl(2))$, and the quantum cluster algebra structure of the amplituhedron's coordinate ring, on identifying the exponents of the latter --- see \eqref{canonical-form} --- with the deformation parameters of the former.}

%\textcolor{blue}{Secondly, from \S \ref{gp-scatter} we have on the one hand GP as inducing the essential gating for QECC, and hence for UQC as provided by a TQNN, and on the other, GP as ubiquitous in scattering processes. A further correspondence can be realized via transformations between the respective GPs in terms of angular measurement, as can implemented by standard geometric transformations such as rotation, translation, dilation, conformal transformation, and compositions of these:
%$$
%\rm{GP~for~(quantum)~gating} ~ \overset{\rm{geometric~ transformation}}\Longleftrightarrow~ \rm{GP~for~scattering}
%$$
%For the general case of topological phases a transformation between these can in principle be achieved by means of a suitable homotopy of the paths along which the respective phases are defined  (see e.g. \cite{aguilar:20}).}
\end{proof}
%\textcolor{blue}{We have used the term `corresponding to' in the general sense of the term, while not claiming this correspondence to be bijective (i.e. one-to-one and onto); in practice it could conceivably be a several-to-one correspondence.}

The sections above show how to construct the required amplituhedron given an execution trace of a TQNN.  Using a re-labeling of Diagram \eqref{comp-scat} as a base, we can summarize this as follows:
\begin{equation} \label{comp-scat-labeled}
\begin{gathered}
\begin{tikzpicture}
\node at (0,.4) {$TQNN$};
\draw [thick, ->] (-2,0) -- (2,0);
\node at (-2.2,0) {$\ket{i_C}$};
\node at (2.4,0) {$\ket{f_C}$};
\node at (-0.4,-2) {$\rho_M(t_i)$};
\node at (4,-2) {$\rho_M(t_f)$};
\draw [thick, ->] (0.3,-2) -- (3.3,-2);
\node at (1.8,-2.4) {$\mathscr{P}_M(t)$};
\draw [thick, ->] (-0.4,-1.7) -- (-2,-0.3);
\node at (-1.7,-1) {$Q_C$};
\draw [thick, ->] (3.8,-1.7) -- (2.3,-0.3);
\node at (2.4,-1) {$Q_C$};
\draw [thick] (-0.3,-1.7) -- (-0.7,-0.1);
\draw [thick, ->] (-0.8,0.1) -- (-1,1);
\node at (-0.2,-0.5) {$Q_S$};
\node at (-1,1.3) {$\ket{i_S}$};
\draw [thick, ->] (3.9,-1.7) -- (3.1,1);
\node at (4,-0.5) {$Q_S$};
\node at (3.2,1.3) {$\ket{f_S}$};
\draw [thick, ->] (-0.4,1.3) -- (2.5,1.3);
\node at (1,1.6) {$Amp$};
\draw [thick, dashed, ->] (-2.1,0.2) -- (-1.2,1);
\node at (-2.2,0.7) {$Imp$};
\draw [thick, dashed, ->] (2.2,0.3) -- (2.9,1);
\node at (2,0.7) {$Imp$};
\node at (0.2,3) {$TV$ hypersimplex};
\node at (-1.7,2.1) {$TV$ triangulation};
\draw [thick, ->] (0, 0.7) -- (0.1, 2.8);
\draw [thick, ->] (0.3, 2.8) -- (0.9,2);
\node at (1.7, 2.4) {$T$-duality};
\end{tikzpicture}
\end{gathered}
\end{equation}
Subject to the relevant assumptions, Thm. \ref{main-tqnn-2} shows that the map $TQNN \longrightarrow Amp$ is a bijection, in which case the induced $Imp$ maps for the initial and final states are also bijections.  As the proof of Thm. \ref{main-tqnn-2} depends only on the notion of execution trace being well-defined, it is architecture-independent, so applies to any physical model of UQC.\footnote{Note that Theorem \ref{main-tqnn-2}
along with diagram \eqref{comp-scat-labeled} above, provide a representation of TV-tetrahedra, viewed as simple examples of convex polytopes, on the hypersimplices associated with various amplituhedra. In principle, corresponding mappings can be realized within the framework of maps of convex polytopes (see e.g. \cite{grunbaum:03}).
Interestingly, we note the amplituhedron has been extended to the combinatorial topology of the {\em associahedron} \cite{abhy:18} (see also \cite{loday:04,pilaud:23}), from which there is a further extension to the concept of {\em cosmohedron} \cite{afv:25}.}

%%$$
%%{\rm{TV~tetraheda}} \rightleftarrows
%%{\rm{postitroid~(cells)~polytopes}}
%%$$
%%as defined in terms of maps of cell-complexes %%\cite{hatcher:02}.}}

\section{Conclusion} \label{concl}

The representation of TQNNs in terms of  amplituhedra exhibited above, underlies the operational indistinguishability of computation and scattering discussed in \S \ref{locc}, particularly with respect to Diagram \eqref{comp-scatter}. Overall, our study shows that QIP and high energy physics are closer than was previously thought, by showing how an amplituhedron can effectively represent the amplitudes of any generic quantum process, not just scattering (cf. Theorem \ref{main-tqnn-2}). A salient feature here is that the operational setting enforced by LOCC puts ``physics in a box'' by forcing all models to be device independent \cite{grinbaum:17}.  This operational constraint prevents, for example, the roles of entanglement or other nonlocal resources in quantum computation from direct observation by ``classical'' observers, i.e. observers restricted to LOCC \cite{fgmz:25}.  In certain cases, a theoretical QRF-invariant can be defined \cite{cepollaro:25}, but such a quantity cannot be classically/empirically realized.

This representation also points the TQNN, with its spin-networks, towards associated kinematic/momentum amplituhedra, thus realizing scattering processes. The kinematic property can be further linked to the {\em momentum/complexity correspondence} of \cite{barbon:20}. This is significant when quantum computations are viewed in terms of computational complexity, which can thus be assigned an element of momentum.

We note, however, that scattering experiments are performed and analyzed under an assumption of sufficient isolation of ``system'' from ``environment'' to render initial and final states well-defined and the scattering process itself unitary, and that accelerators are designed to assure that this assumption is justified.  The extent to which these assumptions can be justified for generic quantum processes, and particularly for practical quantum computers, remains an open question.  Amplituhedra are well-defined only if the amplitudes they represent are well-defined.  Whether {\em unique} amplituhedra can be assigned to a process depends of whether unique amplitudes can be assigned, which in turn depends on state purity and hence the level of effective coarse-graining.  As Hawking \cite{hawking-incompleteness} and others have pointed out, obtaining both self-consistent and complete descriptions of many systems may not be possible (cf. \cite{abel:25} and references therein, as part of an ongoing debate that questions whether large colliders have the capability
to measure entanglement and detect violation of the Bell inequality).

We have previously shown that TQNNs can implement deep learning (DL), and indeed that classical DL systems generalize because fully-connected multi-layer networks are classical limits of TQNNs \cite{marciano:24}. Hence TQNNs as implementations of UQC provide a new route for engineering DL and Monte Carlo methods towards simulating scalar QFTs, and S-matrices, in light of the methods of e.g. \cite[\S V]{rinaldi:22} and \cite{hardy:24} --- cf. the groundbreaking work of \cite{jordan:14}. The correspondence between scattering and UQC may also have further bearing on e.g. the computational representations of quantum gravity effects in scattering processes such as those induced by Hawking black-hole radiation, as studied in \cite{goldberger:20}. The development of ideas leading to this correspondence, further suggests the possibility of measures of complexity corresponding to classes of amplituhedra, as having some analogy with well-known computational complexity classes (such as P, NP, and \#P, etc.).

More broadly, however, the availability of amplituhedra as a potentially tractable representation of generic quantum processes suggests applications in many areas not currently addressed by QIP.  As the edge complexity of the amplituhedron effectively measures the number of off-diagonal terms that are relevant at a given resolution or energy scale, it potentially provides a measure of how similar the QRFs employed by $A$ and $B$ are --- the extent to which they ``speak the same language'' --- in any LOCC protocol.  Consistent with the remarks above, this measure cannot be a within-protocol observable \cite{fgm:24a}.  However, general methods for addressing this question are potentially relevant to the apparent complexity difference between computation and verification (i.e. to whether, or not, does P = NP), the characterization of information flow across the connectome in neuroscience, the general question of reference frames implemented by biological networks, and other areas in which significant problems currently appear intractable.

%%\section*{Acknowledgement}

%%%%%%%%%%%%%%%%%%%%%%%%%%%%%%%%%%%%%%%%%%%%%%%%%%%%%%%%%%%%%%%%%%%%%%%%%%%%%%%%


\begin{thebibliography}{999}

\bibitem{feynman:82}
Feynman, R. P. Simulating physics with computers.  {\em Int. J. Theor. Phys.} {21} (1982), 467--488.

\bibitem{lloyd:96}
Lloyd, S. Universal quantum simulations.  {\em Science} {273} (1996), 1073--1078.

\bibitem{elvang:15}
Elvang, H. and Huang, Y.-t. {\em Scattering Amplitudes in Gauge Theory and Gravity}. Cambridge Univ. Press, Cambridge UK, 2015.

\bibitem{henn:14}
Henn, J. M. and Plefka, J. C. {\em Scattering Amplitudes in Gauge Theories}. Lecture Notes in Physics {883}, Springer Heidelberg-New York (2014), 1--195.

\bibitem{jordan:14}
Jordan, S. P., Lee, K. S. M. and Preskill, J.
Quantum computation of scattering in scalar quantum field theories. {\em Quantum Information and Computation} {14} (2014), 1014--1080.

\bibitem{turro:24}
Turro, F., Wendt, K. A., Quaglioni, S., Pederiva, F. and Roggero, A. Evaluation of phase shifts for non-relativistic elastic scattering using quantum computers. arXiv:2407.04155v2 [quant-ph].

\bibitem{nielsen:00}
Nielsen, M. A. and Chuang, I. L. \emph{Quantum Computation and Quantum Information}. Cambridge Univ. Press. 2000.

\bibitem{shor:94}
Shor, P. W. Algorithms for quantum computation: discrete logarithms and factoring. {\em Proc. 35th Ann. Sympos. on Found. Comp. Sci.}, Santa Fe NM (1994), 124--134.

\bibitem{knill:01}
Knill, E., Laflamme, R. and Milburn, G. J. A scheme for efficient quantum computation with linear optics. {\em Nature} {409} (2001), 46--52.

\bibitem{childs:13}
Childs, A. M., Gosset, D. and Webb, Z. Universal computation by multiparticle quantum walk. {\em Science} {339}(6121) (2013), 791--794.

\bibitem{bao:15}
Bao, N., Hayden, P., Salton, G. and Thomas, N. Universal quantum computation by scattering in the Fermi-Hubbard model. {\em New J. Phys.} {17} (2015), 093028.

\bibitem{fields:89}
Fields, C. Consequences of nonclassical measurement for the algorithmic description of continuous dynamical systems. {\em J. Expt. Theor. Artif. Intell.} 1 (1989) 171-178.

\bibitem{horsman:14}
Horsman, C., Stepney, S., Wagner, R.C., and Kendon, V. When does a physical system compute? {\em Proc. R. Soc. A } {470} (2014), 20140182.

\bibitem{white:24}
White, C. D. and White, M. J.  Magic states of Top quarks.  {\em Phys. Rev. D} {110} (2024), 116016.

\bibitem{arkani:14}
Arkani-Hamed, N. and Trnka, J. The Amplituhedron. {\em J. High Energy Physics} {30} (2014), 30.

\bibitem{chit:14}
Chitambar, E., Leung, D., Man{\v{c}}inska, L., Ozols. M. and Winter, A. Everything you always wanted to know about LOCC (but were afraid to ask).  {\em Commun. Math. Phys.} {328} (2014), 303-326.

\bibitem{fgm:22a}
Fields, C., Glazebrook, J. F. and Marcian\`{o}, A.  Sequential measurements, topological quantum field theories, and topological quantum neural networks. {\em Fortschr. Physik} {70} (2022), 202200104.

\bibitem{bartlett:07}
Bartlett, S. D., Rudolph, T. and Spekkens, R. W. Reference frames, super-selection rules, and quantum information. {\em Rev. Mod. Phys.} {79} (2007), 555-609.

\bibitem{atiyah:88}\emph{}
Atiyah, M. F. Topological quantum field theory. {\em Publ. Math. IH\'{E}S} {68} (1988), 175--186.

\bibitem{deutsch:85}
Deutsch, D. Quantum theory, the Church-Turing principle, and the universal quantum computer.  {\em Proc. R. Soc. London A} {400} (1985), 97-117.

\bibitem{penrose:71}
Penrose, R. Angular momentum: An approach to combinatorial space-time. In (T. Bastin ed.) {\em Quantum Theory and Beyond}, pp. 151--180. Cambridge Univ. Press, Cambridge UK, 1971.

\bibitem{smollin:97}
Smolin, L. The future of spin networks. (1997) arXiv:gr-qc/9702030.

\bibitem{marciano:22}
Marcian\`{o}, A., Chen, D., Fabrocini, F. et al.
Quantum neural networks and topological quantum field theories. {\em Neural Networks} {153} (2022), 164--178.

\bibitem{marciano:24}
Marcian\`{o}, A., Zappala, E., Torda, T. et al.
Deep neural networks as the semi-classical limit of topological quantumm neural networks: the problem of generalization. arXiv:2210.13741v2, 2024.

\bibitem{freedman:02a}
Freedman, M. H., Kitaev, A. and Wang, Z. Simulation of topological field theories by quantum computers. {\em.  Comm. Math. Phys.} {227}(3) (2002), 587--603.

\bibitem{freedman:02b}
Freedman, M. H., Kitaev, A., Larsen, M. J. and Wang, Z. Topological quantum computation. {\em Bull. Amer. Soc.} {40}(1) (2002), 31--38.

\bibitem{koenig:10}
Koenig, R., Kuperberg, G. and Reichardt, B. W. Quantum computation with Turaev-Viro codes. {\em Annals of Physics} {325} (2010), 2707--2749.

\bibitem{fgm:24a}
Fields, C., Glazebrook, J. F. and Marcian\`{o}, A. Communication protocols and QECCs from the perspective of TQFT, Part I: Constructing LOCC protocols and QECCs from TQFTs. {\em Fortschr. Physik} {72} (2024), 202400049.

\bibitem{fm:20}
Fields, C. and Marcian\`{o}, A. Holographic screens are classical information channels. {\em Quant. Rep.} {2} (2019), 326--336.

\bibitem{fgm:22b}
Fields, C., Glazebrook, J. F. and Marcian\`{o}, A. The physical meaning of the Holographic Principle. {\em Quanta} {11} (2022), 72-96.

\bibitem{fg:21}
Fields, C.; Glazebrook, J. F. Information flow in context-dependent hierarchical Bayesian inference. \emph{J. Expt. Theor. Artif. Intell.} {34} (2022), 111--142.

\bibitem{fg:23}
Fields, C. and Glazebrook, J. F. Separability, contextuality, and the quantum Frame Problem.  {\em Int. J. Theor. Phys.} {62} (2023), 159.

\bibitem{fgmz:24}
Fields, C., Glazebrook, J. F., Marcian\`{o}, A. and Zappala, E. ER=EPR is an operational theorem. {\em Phys. Lett. B} {860} (2025), 139150.

\bibitem{mikusch:21}
Mikusch, M., Barbado, L. C. and Brukner, \u{C}. Transformations of spin in quantum reference frames. {\em Phys. Rev. Res.} {3} (2021), 043138, 12pp.

\bibitem{carette:25}
Carette, T., G{\l}owacki, J. and Loveridge, L. Operational quantum reference frames. {\em Quantum} (9) (2025), 1680.

\bibitem{palmer:14}
Palmer, M. C., Girelli, F. and Bartlett, S. D. Changing quantum reference frames. {\em Phys. Rev. A} {89} (2014), 052121.

\bibitem{giaco:19a}
Giacomini, F., Castro-Ruiz, E. and Brukner, \u{C}. Relativisic quantum reference frames: the operational meaning of spin. {\em Phys. Rev. Lett.} {123} (2019), 090404.

\bibitem{giaco:19b}
Giacomini, F., Castro-Ruiz, E. and Brukner, \u{C}. Quantum mechanics and the covariance of physical laws in quantum reference frames. {\em Nat. Commun.} {10} (2019), 1.

\bibitem{hamette:20}
de la Hamette,  A.-C.  and Galley, T. D.   Quantum reference frames for general symmetry groups. {\em Quantum} {4}  (2020), 367.

\bibitem{turing:37}
Turing, A.  On Computable Numbers, with an Application to the Entscheidungsproblem. {\em Proc. London Math. Soc.} {s2-42}(1) (1937), 230--265.

\bibitem{barwise:97}
Barwise, J. and Seligman, J. \emph{Information Flow: The Logic of Distributed Systems} (Cambridge Tracts in Theoretical Computer Science, 44). Cambridge, UK: Cambridge University Press, 1997.

\bibitem{fg:19}
Fields, C. and Glazebrook, J. F. A mosaic of Chu spaces and Channel Theory I: Category-theoretic concepts and tools. \emph{J. Expt. Theor. Artif. Intell.} {31} (2019), 177-213.

\bibitem{kingma:19}
Kingma, D. P and Welling, M. An Introduction to variational autoencoders. {\em Foundations and Trends in Machine Learning} {12}(4) (2019), 307--392.

\bibitem{pratt:99a}
Pratt, V. Chu spaces. \emph{School on Category Theory and Applications (Coimbra 1999)}, Vol. 21 of \emph{Textos Mat.
S\'{e}r. B}, University of Coimbra, Coimbra. (pp. 39--100), 1999.

\bibitem{pratt:99b}
Pratt, V. Chu spaces from the representational viewpoint. \emph{Ann. Pure Appl. Logic} {96} (1999), 319--333.

\bibitem{baez:96}
Baez, J. C. Spin network states in gauge theory. {\em Adv. Math.} {117} (1996), 253--272.

\bibitem{vaid:22}
Vaid, D. and Suresh, D. Coherent states and particle scattering in loop quantum gravity. {\em Eur. Phys. J. C} {82} (2022), 723.

\bibitem{basu:70}
Basu S. K. Transformations on directed graphs. {\em J. Comb. Theor.} {9} (1970), 244--256.

\bibitem{cooper:66}
Cooper, D. C. Theoretical results concerning programs regarded as directed graphs. In (D.  Mitchic, ed.) {\em Proc. Second Machine Intelligence Conference} Edinburgh, July 1966. Oliver and Boyd, Edinburgh UK, 1966.

\bibitem{garou:13}
Garoufalidis, S. and van der Veen, R. (with an appendix by D. Zagier),
Asymptotics of classical spin networks. {\em Geom. Topol.} {17} (2013), 1--37.

\bibitem{garcia:10}
Garcia-Escartin, J. C. and Chamorro-Posada, P. Universal quantum computation with the orbital angular momentum of a single photon. {\em Journal of Optics} {13}(6) (2010), 064022.

\bibitem{reck:94}
Reck, M., Zeilinger, A., Bernstein, H. J. and Bertani. P. Experimental realization of any discrete unitary operator. {\em Phys. Rev. Lett.} {73}(1) (1994), 58--62.

\bibitem{svozil:96}
Svozil, K. Experimental realization of any discrete operator. (1996) arXiv:quant-ph/9612010v1

\bibitem{yang:06}
Yang, S., Song, Z. and Sun, C. Beam splitter for spin waves in quantum spin network. {\em Eur. Phys. J.} {52} (2006), 377--381.

\bibitem{grimaudo:22}
Grimaudo, R., Magalh\~{a}ese de Castro, A. S., Messina, A. and Valenti, D. Spin-chain star systems: entangling multiple chains of spin qubits. {\em Fortschr. Physik} {70} (2022), 2200042.

\bibitem{jin:09}
Jin, L. and Song, Z. Generation of GHZ and W states for stationary qubits in spin network via resonance scattering. {\em Phys. Rev. A} {79} (2009), 04234.

\bibitem{deutsch:89}
Deutsch, D. Quantum computational networks. {\em Proc. Royal Soc. London A} {425} (1989), 73--90.

\bibitem{turner:16}
Turner, P. S. and Markham, D. Derandomizing quantum circuits with measurement based unitary design. {\em Phys. Rev. Lett.} {16} (2016), 200501.

\bibitem{karayel:25}
Karayel, E. Derandomization with pseudorandomness. {\em Ann. Formalized Math.} {1} (2025), 57--102.

\bibitem{kay:05}
Kay, A. and Ericsson, M. Geometric effects and computation in spin networks. {\em New J. Phys.} {7} (2005), 143.


\bibitem{mar:20}
Marcian\`{o}, A., Chen, D., Fabrocini, F., Fields, C.,  Greco, E.,  Gresnigt, N.  Jinklub, K., Lulli, M., Terzidis, K. and Zappala, E. Quantum neural networks and topological quantum field theories. {\em Neural Netw.} {153} (2022), 164--178.

\bibitem{berry:90}
Berry, M. Anticipation of the geometric phase. {\em Physics Today} {12} (1990), 34--40.

\bibitem{zanardi:99}
Zanardi, P. and Rasetti, M. Holonomic quantum computation. {\em Phys. Lett. A} {264} (1999), 94--99.

\bibitem{sjoqvist:08}
Sj\"{o}qvist, E. A new phase in quantum computation. {\em Physics} {1} (2008), 35.

\bibitem{sjoqvist:15}
Sj\"{o}qvist, E. Geometric phases in quantum information. {\em Int. J. Quantum Chem.} {115} (2015), 1311--1326.

\bibitem{chen:24}
Chen, T., Hu, J.-Q., Zhang, C. and Xue, Z.-Y. Universal robust quantum control via geometric trajectory correction. {\em Phys. Rev. Appl.} {22} (2024), 03732.

\bibitem{otten:21}
Otten, M., Kapoor, K, Bari\c{s} \"{O}zg\"{u}ler, A. et al. Impacts and noise and structure on quantum information encoded in quantum memory. {\em Phts. Rev. A} {104} (2021), 012605.

\bibitem{casagrande:24}
Casagrande, A. A geometric approach to the study of quantum systems: Berry's phase and topological quantum computation. Thesis. University of Bologna, Italy, 2024.

\bibitem{liu:11}
Liu, H.  D. and Yi, X. X. Geometric phases in a scattering process.
{\em Phys. Rev. A} {84}, (2011), 022114.

\bibitem{wang:24}
Wang, P., Chen, Y. and Liu, W.  Geometric phase-driven scattering evolutions.
{\em Phys. Rev. Lett.} {133}, 093801.

\bibitem{li:25}
Li, S., Huang, J. Luo, C. et al.
Sideways scattering oscillations reveal geometric phase effect and isotope effect in the $H+D2 \rightarrow HD+D$  reaction.
{\em Nature Communications} {16} (2025), 11166.



\bibitem{kitaev:77}
Kitaev, A. Fault-tolerant quantum computation by anyons. {\em Annals of Physics} {303} (1977), 2-30.

\bibitem{atiyah:90}
Atiyah, M. F. {\em The Geometry and Physics of Knots}. Cambridge Univ. Press, Cambridge UK, 1990.

\bibitem{ponz:68}
Ponzano, G. and Regge, T. Semiclassical limit of Racah coefficients. In (F. Bloch, ed.) {\em Spectroscopic and group theoretical methods in physics}. North Holland, 1968.

\bibitem{tv:92}
Turaev, V. and Viro, O. State sum invariants of 3-manifolds and quantum 6j symbols. {\em Topology} {31} (1992), 865--902.

\bibitem{foxon:95}
Foxon, T. J. Spin networks, Turaev-Viro theory and the loop representation. {\em Class. Quant. Grav.} {12} (1995), 951--964.

\bibitem{rovelli:93}
Rovelli, C. Basis of the Ponzano-Regge-Turaev-Viro-Ooguri quantum gravity model is the loop representation basis. {\em Phys. Rev. D} {48} (1993), 2702-2707.

\bibitem{walker}
Walker, K. On Witten’s 3-maiifold invariants, preprint (1991), https://canyon23.net/math/1991TQFTNotes.pdf

\bibitem{turaev_shadows}
Turaev, V. Shadow links and face models of statistical mechanics.  {\em J. Diff. Geom.} {36} (1992), 35--74.

\bibitem{barrett:99}
Barrett, J. W. and Westbury, B. W. Spherical categories. {\em Adv. Math.} {143} (1999), 357--375.

\bibitem{roberts:95}
Roberts, J. Skein theory and Turaev-Viro invariants. {\em Topology} {34}(4) (1995), 771--787.

\bibitem{turaevquantum}
Turaev, V. G.  \emph{Quantum Invariants of Knots and 3-Manifolds}. De Gruyter, Berlin DE, 2010.

\bibitem{Turaev:1992hq}
Turaev, Y. G. and Viro, O. Y.  State sum invariants of 3-manifolds and quantum 6j-symbols. {\em Topology} {31} (1992), 865--902.

\bibitem{rt:91}
Reshetikhin, N. and Turaev, V. Invariants of 3-manifolds via link polynomials and quantum groups. {\em Invent. Math.} {103} (1991), 547--597.

\bibitem{KL}
Kauffman, L. H., Lins, S. Temperley-Lieb recoupling theory and invariants of 3-manifolds. {\em Princeton University Press} Princeton NY, 1994.

\bibitem{blanchet:00}
Blanchet, C. Hecke algebras, modular categories and 3-manifolds quantum invariants. {\em Topology} {39} (2000), 193--223.


\bibitem{lickorish1}
Lickorish, W. B. The skein method for three-manifold invariants, {\em Journal of Knot Theory and Its Ramifications} {02}(02) (1993), 171--194.

\bibitem{lickorish2}
Lickorish, W. B. Skeins and handlebodies, {\em Pacific J. Math.} {159} (1993), 337--349.

\bibitem{lickorish3}
Lickorish, W. B. R.  A representation of orientable combinatorial 3-manifolds, \emph{Annals of Mathematics}, {76} (3) (1962), 531--540.

\bibitem{yokota:97}
Yokota, Y. Skeins and quantum SU(N) invariants of 3-manifolds. {\em Math. Ann.} {307} (1997), 109--138.

\bibitem{lickorish4}
Lickorish, W. B. R. Skeins, SU(N) three-manifold invariants
and TQFT. {\em Comment. Math. Helv.} {75} (2000), 45--64.

\bibitem{wallace}
Wallace, A. H.  Modifications and cobounding manifolds, \emph{Canadian Journal of Mathematics}, {12} (1960), 503--528.

\bibitem{wenzl1990quantum}
Wenzl, H.  Quantum groups and subfactors of type B, C, and D, \emph{Communications in Mathematical Physics}, {133}(2) (1990), 383--432.

\bibitem{hexagon}
Lulli, M., Marcian{\`o}, A. and Zappala, E.  Exact Evaluation of Hexagonal Spin-Networks for Topological Quantum Neural Networks, \emph{Fortschritte der Physik}, {73}(6) (2025), e70005.

\bibitem{kirillov}
Kirillov Jr, Alexander, String-net model of Turaev-Viro invariants arXiv preprint arXiv:1106.6033 2011.

\bibitem{levin_wen}
Levin, M. A. and Wen, X.-G., String-net condensation: A physical mechanism for topological phases, {\em Physical Review B—Condensed Matter and Materials Physics} {71} (2005), 045110.

\bibitem{knill:97}
Knill, E. and Laflamme, R. Theory of quantum error-correcting codes. {\em Phys. Rev. A} {55} (1997), 900--911.

\bibitem{kawa:05}
Kawahigashi, Y., Sato, N. and Wakui, M. $(2+1)$-dimensional topological quantum field theory and Dehn surgery formula for 3-manifold invariants. {\em Adv. Math.} {195} (2005), 165--204.

\bibitem{alagic:17}
Alagic, G. and Lo, C. Quantum invariants of 3-manifolds and NP vs \#P. {\em Quantum Information and Computation} {17}(1,2) (2017), 0125--0146.

\bibitem{freedmanmodular}
Freedman, M. H., Larsen, M. and Wang, Z.  A modular functor which is universal for quantum computation, \emph{Communications in Mathematical Physics}, {227} (2002), 605--622.

\bibitem{freedman:98}
Freedman, M. H. P/NP, and the quantum field computer. {\em Proc. Nat. Acad. Sci. USA} {95} (1998), 98--111.

\bibitem{arkani:16}
Arkani-Hamed, N., Bourjaily, J., Cachazo, F., Goncharov, A., Trnka, J. and Postnikov, A.  {\em Grassmannian Geometry of Scattering Amplitudes}. Cambridge University Press, Cambridge UK, 2016.

\bibitem{Speyer:04}
Speyer, D. and Sturmfels, B. The tropical Grassmannian. {\em Advances in Geometry} {4}(3) (2004), 389--411.

\bibitem{Ardila:22}
Ardila-Mantilla, F., The geometry of geometries: matroid theory, old and new, In, {\em Proc. Int. Cong. Math.}  {6} (2022), 4510-4541.

\bibitem{Stanley:04}
Stanley, R.P. , Recent developments in algebraic combinatorics, In {\em Israel Journal of Mathematics} 143.1 (2004), 317-339.

\bibitem{Neu:99}
Neukirch, J., Algebraic Number Theory, Springer Berlin, Heidelberg, 2196-9701.
 %https://doi.org/10.1007/978-3-662-03983-0.

\bibitem{Lam:2018mun}
Lam, V. and W{\"u}thrich, C., Spacetime is as spacetime does, In  {\em Stud. Hist. Phil. Sci. B} {64} (2018), 39-51.

\bibitem{Travaglini:2022uwo}
Travaglini,~G.,  Brandhuber,~A., Dorey,~P., McLoughlin,~T., Abreu,~S., Bern,~Z., Bjerrum-Bohr,~N.~E.~J., Bl{\"u}mlein,~J., Britto,~R. and Carrasco,~J.~J.~M. \textit{et al.},
The SAGEX review on scattering amplitudes, In {\em J. Phys. A}  {55}, 44 (2022), 443001.
%doi:10.1088/1751-8121/ac8380
%[arXiv:2203.13011 [hep-th]].

\bibitem{Pasterski:2021raf}
Pasterski, S.,  Pate, M.,  and Raclariu, A.~M.,  Celestial Holography, [arXiv:2111.11392 [hep-th]].

\bibitem{Arkani-Hamed:2021iya}
Arkani-Hamed,~N., Henn,~J. and Trnka,~J., Nonperturbative negative geometries: amplitudes at strong coupling and the amplituhedron, In {\em JHEP}  {03} (2022), 108.
%doi:10.1007/JHEP03(2022)108
%[arXiv:2112.06956 [hep-th]].

\bibitem{Bern:2022jnl}
Bern,~Z., and Trnka,~J., Snowmass TF04 Report: Scattering Amplitudes and their Applications, [arXiv:2210.03146 [hep-th]].

\bibitem{Witten:1988ze}
Witten, E. Topological Quantum Field Theory, {\em Commun. Math. Phys.} {117} (1988), 353.

\bibitem{Birmingham:1991ty}
Birmingham, D., Blau, M., Rakowski, M. and Thompson, G. Geometry of topological field theory. {\em Phys. Rept.} {209} (1991), 129--340.

%\bibitem{Bootstrap papers}
%{\color{red}***********missing ref***********}

\bibitem{wong:67}
Wong, Y.-C. Differential geometry of Grassmann manifolds. {\em Proc. Natl. Acad. Sci. USA} {57} (3), 589--594.

\bibitem{gelfand:87}
Gelfand, I. M., Goresky. R. M., MacPherson, R. D. and Serganova, V. V. Combinatorial geometries, convex polyhedra, and Schubert cells. {\em Adv. in Math.} {63}(3) (1987), 301--316.

\bibitem{lam:07}
Lam, T. and Postnikov, A. Alcoved polytopes I. {\em Discrete and Combinatorial Geometry} {38}(3) (2007), 453-478.

\bibitem{dittrich:16}
Dittrich, B. and Hynbida, J. Ising model from intertwiners. {\em Ann. Inst. Henri Poincar\'{e} Comb. Phys. Interact.} {3} (2016), 363--380.

\bibitem{guevara:20}
Guevara, A. and Zhang, Y. Planar matrices and arrays of Feynman diagrams: poles for higher $k$. {\em Commun. in Theor. Phys.} {76}(4), 045001.

\bibitem{lukowski:23}
{L}ukowski, T., Parisi, M. and Williams, L. K. The positive tropical Grassmannian, the hypersimplex, and the $m=2$ amplituhedron. {\em Int. Math. Res. Notes} {19} (2023), 16778--16836.

\bibitem{parisi:23}
Parisi, M., Sherman-Bennett, M. and Williams, L. K. The $m=2$ amplituhedron and the hypersimplex: signs, clusters, tilings, Eulerian numbers. {\em Comm. Amer. Math. Soc.} {3} (2023), 329--399.

\bibitem{tsuker:15}
Tsukerman, E. and Williams, L. Bruhat interval polytopes. {\em Adv. in Math.} {285} (2015), 766--810.

\bibitem{lam:24}
Lam, T. and Postnikov, A. Polypositroids. {\em Forum of Mathematics, Sigma} {12} (2024), 42.

\bibitem{postnikov:18}
Postnikov, A. Positive Grassmannian and polyhedral subdivions. {\em Proc. Int. Cong. Math. (2018)}. World Scientific, April (2019), 3181--3211.

\bibitem{ardila:14}
Ardila, F. Rinc\'{o}n,, F. and Williams, L. Positroids, non-crossing partitions, and positively oriented manifolds. In (C. Clivans, ed.) {\em 26th International Conference on Formal Power Series and Algebraic Combinatorics (FPSAC)}, pp. 655--666. DePaul Univ., Chicago IL (2014).

\bibitem{even:21}
Even-Zohar, C., Lakrec, T. and Tessler, R. J. The Amplituhedron BCFW triangulation. 2021, arXiv:2112.02703.

\bibitem{damg:19}
Damgaard, D., Ferro, L., {L}ukowski, T. and Parisi, M. The momentum amplituhedron. {\em JHEP} {08} (2019), 042.

\bibitem{hatcher:02}
Hatcher, A. {\em Algebraic Topology}. Cambridge Univ. Press, Cambridge UK, 2002.

\bibitem{galashin:22a}
Galashin, P., Karp, S. N. and Lam, T. The totally nonnegative Grassmannian is a ball. {\em Adv. Math.} {397} (2022), 108--123.

\bibitem{galashin:22b}
Galashin, P., Karp, S. N. and Lam, T. Regularity theorem for totally nonnegative flag varities. {\em J. Amer. Math. Soc.} {35} (2022), 513--579.

\bibitem{bao:19}
Bao, H. and He, X. The $m=2$ amplituhedron. (2019) arXiv:1909.06015v2.

\bibitem{Elvang:2015rqa}
Elvang,~H. and Huang,~Y.~T., Scattering Amplitudes in Gauge Theory and Gravity, Cambridge University Press (2015).

\bibitem{Vaid:2022lrm}
Vaid,~D. and Suresh,~D., Coherent states and particle scattering in loop quantum gravity, In {\em
Eur. Phys. J. C} {82}, 8, (2022), 723.
%doi:10.1140/epjc/s10052-022-10701-6
%[arXiv:2208.10632 [hep-th]].

\bibitem{Regge_calculus}
Barrett,~J.W., The geometry of classical Regge calculus, In  {\em Class. Quantum Grav.} 4 (1987), 1565.

\bibitem{Freidel:2005qe}
Freidel,~L., Group field theory: An Overview, In  {\em Int. J. Theor. Phys.} {44} (2005), 1769-1783.
%doi:10.1007/s10773-005-8894-1
%[arXiv:hep-th/0505016 [hep-th]].

\bibitem{Rovelli_book}
Rovelli, C. {\em Quantum Gravity.} Cambridge Univ. Press (2004).

\bibitem{Rovelli:2014ssa}
Rovelli,~C. and Vidotto,~F. {\em Covariant Loop Quantum Gravity.} Cambridge Univ. Press (2014).

\bibitem{Gresnigt:2022lwq}
Gresnigt,~N.,  Marciano,~A., and Zappala,~E., Dynamical emergence of $SU_q(2)$ from the regularization of $(2+1)D$ gravity with a cosmological constant, In  {\em Phys. Rev. D }{107}, 4 (2023), 046018.
%doi:10.1103/PhysRevD.107.046018
%[arXiv:2201.01726 [gr-qc]].

\bibitem{Schulten:1975yu}
Schulten,~ K. and Gordon,~R.~G., Exact Recursive Evaluation of 3J and 6J Coefficients for Quantum Mechanical Coupling of Angular Momenta, In {\em J. Math. Phys.} {16} (1975), 1961-1970.
%doi:10.1063/1.522426


\bibitem{Britto:2005fq}
Britto,~R., Cachazo,~F., Feng,~B. and Witten,~E., Direct proof of tree-level recursion relation in Yang-Mills theory, In {\em Phys. Rev. Lett.} {94} (2005), 181602.
%doi:10.1103/PhysRevLett.94.181602
%[arXiv:hep-th/0501052 [hep-th]].

\bibitem{Bianchi:2021ric}
Bianchi,~E. and Martin-Dussaud,~P., Causal Structure in Spin Foams, In {\em Universe} {10}, 4 (2024), 181.
%doi:10.3390/universe10040181
%[arXiv:2109.00986 [gr-qc]].

\bibitem{Skinner:2010cz}
Skinner,~D., A Direct Proof of BCFW Recursion for Twistor-Strings, In {\em J. High Energ. Phys.}  {01} (2011), 072.
%doi:10.1007/JHEP01(2011)072
%[arXiv:1007.0195 [hep-th]].

\bibitem{Arkani-Hamed:2017vfh}
Arkani-Hamed,~N., Thomas,~H. and Trnka,~J.
Unwinding the amplituhedron in binary, In
{\em J. High Energ. Phys.} {01} (2018), 016.
%%https://doi.org/10.1007/JHEP01(2018)016,
%}%%{10.1007/JHEP01(2018)016},
%%arXiv:1704.05069 [hep-th].

\bibitem{Damgaard:2019ztj}
Damgaard, P. H., Ferro, L., Lukowski, T. and Moerman, R., The momentum amplituhedron, In {\em J. High Energ. Phys.} {42} (2019).
%%https://doi.org/10.1007/JHEP08(2019)042 %}%%{10.1007/JHEP08(2019)042},
%%arXiv:1905.04216 [hep-th].

\bibitem{Ferro:2020lgp}
Ferro,~L., {L}ukowski,~ T. and Moerman,~R., From momentum amplituhedron boundaries toamplitude singularities and back,
In {\em J. High Energ. Phys.} {07}, 7 (2020), 201.
%doi:10.1007/JHEP07(2020)201
%[arXiv:2003.13704 [hep-th]].


\bibitem{Arkani-Hamed:2017mur}
Arkani-Hamed,~N., Bai,~Y. and Lam,~T.
Positive geometries and canonical forms,
{\em J. High Energ. Phys.} {11} (2017), 039.
%%https://doi.org/10.1007/JHEP11(2017)039,
%%arXiv:1703.04541 [hep-th].

\bibitem{Cattaneo:2023wjq}
Cattaneo,~A.S., Mnev,~P. and Reshetikhin,~N.
Perturbative quantum gauge theories on manifolds with boundary, In {\em Commun. Math. Phys.} {357} (2018), 631-730.
%%https://doi.org/10.1007/s00220-017-3014-7,
%%arXiv:1507.01221 [math-ph].

\bibitem{Costello:2023hmi}
Costello,~K. and Paquette,~N.M.
Twisted holography, {\em Adv. Theor. Math. Phys.} {27} (2023), 1-145.

\bibitem{Postnikov:2006kva}
Postnikov,~A., Total positivity, Grassmannians, and networks,
[arXiv:math/0609764 [math.CO]].

\bibitem{hodges:15}
Hodges,~A. Twistors and amplitudes. {\em Phil. Trans. R. Soc. A} {373} (2015), 20140248.

\bibitem{Fock:2003xx}
Fock,~V.V. and Goncharov,~A.B.
Cluster ensembles, quantization and the dilogarithm.
{\em Ann. Sci. Éc. Norm. Supér.} {42} (2009), 865--930.

%\bibitem{Bullimore:2015lsa}
%Bullimore, M., Ferro, L. and Lukowski, T. Massive amplitudes on the Coulomb branch of $\mathcal{N}=4$ SYM. {\em J. High Energ. Phys.} {54} (2016).
%%https://doi.org/10.1007/JHEP11(2016)054,
%%arXiv:1607.04513 [hep-th].

\bibitem{grinbaum:17}
Grinbaum,~A.  How device-independent approaches change the meaning of physical theory. {\em Stud. Hist. Phil. Mod. Phys.} {58} (2017), 22--30.

\bibitem{fgmz:25}
Fields, C., Glazebrook, J. F., Marcian\`{o}, A. and Zappala, E. Whether a quantum computation employs nonlocal resources is operationally undecidable. {\em EPL} {151} (2025), 48001.

\bibitem{aguilar:20}
Aguilar, P. Chryssomalakos, C., Guzman-Gonz\`{a}lez, E. and Hanotel, L. When geometric phases turn topological. {\em J. Physics A : Math. Theor.} {53} (2020), 065301.

\bibitem{cepollaro:25}
Cepollaro, C., Akil, A. Cie\'{s}linski, P., de la Hamette, A.-C. and Brukner, \v{C}. Sum of entanglement and subsystem coherence is invariant under quantum reference frame transformations. {\em Phys. Rev. Lett.} {135} (2025), 010201.

\bibitem{barbon:20}
Barb\'{o}n, J. L. F., Martin-Garcia, J. and Sasieta, M. Proof of a momentum/complexity correspondence.
{\em Physical Review D} {102} (2020), 101901(R).

\bibitem{hawking-incompleteness}
Hawking, S. G\"{o}del and the End of Physics. Lecture at the Dirac Centennial Celebration. Centre for Mathematical Sciences, University of Cambridge, Cambridge, UK, 2002. Available online: https://www.damtp.cam.ac.uk/events/strings02/dirac/
hawking.html (accessed on 13 January 2024).

\bibitem{abel:25}
Abel, S. A., Dreiner, H. K., Sengupta, R. and Ubaldi, L.
Colliders are testing neither locality via Bell's inequality nor entanglement versus non-entanglement.
arXiv:2507.15949v1 [hep-ph]

\bibitem{rinaldi:22}
Rinaldi, E., Han, X., Hassan, M. et al. Matrix-model simulation using quantum computing, deep learning, and lattice Monte Carlo.  {\em PRX Quantum} {3} (2022), 010324.

\bibitem{hardy:24}
Hardy, A., Mukhopadhyay, P., Alam, M S. et al. Optimized quantum simulation algorithms for scalar quantum field theories. (2024) arXiv:2407.13819v1 [quant-ph]

\bibitem{goldberger:20}
Goldberger, W. and Rothstein, I. Z. Virtual Hawking Radiation. {\em Phys. Rev. Lett.} {125} (2020), 211301.






%%%%%%%%%%%%%%%%%%%*********unused refs*************%%%%%%%%%%%%%%







\bibitem{baez:00}
Baez, J. C. An introduction to spin foam models of BF theory and quantum gravity. In ( Gausterer, H., Pittner, L. and Grosse, H. eds.) {\em Geometry and Quantum Physics} pp. 25--93. Lecture Notes in Physics {543}. Springer, Berlin-Heidelberg, 2000.

\bibitem{bonzom:16}
Bonzom, V., Costantino, F. and Livine, E. R. Duality between spin networks and the 2D Ising model. {\em Commun. Math. Phys.} {344}(2) (2016), 531--579.

\bibitem{even:23}
Even-Zohar, C., Lakrec, T. and Tessler, R. J. The BCFW triangulation of the amplituhedron. In (Davis, ed.) {\em Proceedings of the 35th Conference on Formal Power Series and Algebraic Combinatorics}, {\em S\'{e}minaire Lotharingien de Combinatoire} {89}B Article 47 (2023), arXiv:2112.02703.

%%\bibitem{fg:20b}
%%Fields, C. and Glazebrook, J. F. Do Process-1 %%simulations generate the epistemic feelings that drive %%Process-2 decision making? \emph{Cognitive Processing}
%%{21} (2020), 533--553.



\bibitem{fg:20a}
Fields, C. and Glazebrook, J. F. Representing measurement as a thermodynamic symmetry breaking. \emph{Symmetry} {12} (2020), 810.
%%doi:10.3390/sym12050810

\bibitem{fgm:21}
Fields, C., Glazebrook, J. F., and Marcian\`{o}, A. Reference frame induced symmetry breaking on holographic screens.  {\em Symmetry} {13} (2021), 408.


\bibitem{fgm:24b}
Fields, C., Glazebrook, J. F. and Marcian\`{o}, A. Communication protocols and QECCs from the perspective of TQFT, Part II: QECCs as spacetimes. {\em Fortschr. Physik} {72} (2024),  202400050.


\bibitem{fields-et-al:23}
Fields, C., Fabrocini, F. Friston, K., Glazebrook, J. F., Hazan, H., Levin, L. and Marcian\`{o}, A. Control flow in active inference systems. Part I: Classical and quantum formulations of active inference.
{\em IEEE Transactions on Molecular, Biological, and Multi-Scale Communications} {9}(2) (2023), 235-245. %%https://doi.org/10.1109/TMBMC.2023.3272150
%%https://arxiv.org/abs/2303.01514


\bibitem{freidel:13}
Freidel, L. and Hnybida, J. On the exact evaluation of spin networks. {\em J. Math. Phys.} {54}(11) (2014), 112301.

\bibitem{galashin:20}
Galashin, P. and Pylyavskyy, P. Ising model and the positive orthogonal Grassmanian. {\em Duke Math. J. } {69}(10) (2020), 1877--1942.


\bibitem{grunbaum:03}
Gr\"{u}nbaum, B. {\em Convex Polytopes} 2nd edn. (edited by G. M. Ziegler), Graduate Texts in Mathematics {221}, Springer Berlin-Heidelberg-New York, 2003.

\bibitem{abhy:18}
Arkani-Hamed, N., Bai, Y., He, S. and Yang, G. Scattering forms and the positive geometry of kinematics, color and the worldsheet.
{\em JHEP} {1805} (2018), 096.

\bibitem{loday:04}
Loday, J.-L. Realization of the Stasheff polytope. {\em Arch. Math. (Basel)} {83}{3} (2004), 267--278.

\bibitem{pilaud:23}
Pilaud, V., Santos, F. and Ziegler, G. M. Celebrating Loday's associahedron. {\em Archiiv der Mathematik} {121} (2023), 559--601.

\bibitem{afv:25}
Arkani-Hamed, N., Figueiredo, C. and Vaz\~{a}o, F. Cosmohedra. {\em JHEP} 2025 (2025), 29.



%%\bibitem{humphreys:72}
%%Humphreys, J. E. {\em Introduction to Lie Algebras and %%Representation Theory}. Graduate Texts in Mathematics %%{9}. Springer Heidelberg New York, 1972.


%%\bibitem{humphreys:90}
%%Humphreys, J. E. {\em Reflection Groups and Coxeter %%Groups}. Cambridge Studies in Advanced Mathematics {29}. %%Cambridge Univ. Press, Cambridge UK, 1990.




\bibitem{karp:20}
Karp, S. N., Williams, L. K. and Zhang, Y. X. Decompostiosn of amplituhedra. {\em Ann. Inst. Henri Poincar\'{e} Comb. Phys Interact.} {7}(3) (2020), 303--363.


\bibitem{kisie:11}
Kisielowski, M., Lewandowski, J. and Puchta, J. Feynman diagramatic approach to spinfoams. {\em Classical and Quantum Gravity} {29}(1) (2011), 15009.


%%\bibitem{lee:91}
%%Lee, C. W. Regular triangulations of convex polytopes. %%In (P. Gritzmann and B. Sturmfels, eds.) {\em Applied %%Geometry and Discrete Mathematics - The Victor Klee %%Festschrift}. DIMACS {4}, pp. 446--453.  Amer. Math. %%Soc., Providence RI, 1991.

%%\bibitem{lee:17}
%%Lee, C. W. and Santos, F. Subdivisions and %%triangulations of polytopes. In (J. E. Goodman, J. %%O'Rourke and C. D. T\'{o}th, eds.) {\em Handbook of %%Discrete and Computational Geometry} (3rd edn.),  pp. %%415--477. Chapman and Hall/CRC, Boca Raton FL, 2017.

\bibitem{li:24}
Li, T., Lai, W. K., Wang, E. and Xing, H. Scattering amplitude from quantum computing with reduction formula. {\em Phys. Rev. D} {109} (2024), 036025.




%%\bibitem{lusztig:94}
%%Lusztig, G. Total positivity in reductive groups. In %%%%{\em Lie Theory and Geometry} volume 123 of
%%{\em Progr. Math.} pp. 531-–68. Birkh\"{a}user, Boston MA, 1994.

\bibitem{maldecena:13}
Maldacena, J. and Susskind, L. Cool horizons for entangled black holes.  {\em Fortschr. Physik} {61} (2013), 781-811.


\bibitem{reisen:03}
Reisenberger, M. and Rovelli, C. Spin foams as Feynman diagrams. {\em 2001: A Relativistic Spacetime Odyssey} pp. 431--448, World Scientific Press, 2003.





\bibitem{susskind:16}
Susskind, L. Copenhagen vs Everett, teleportation, and ER=EPR. {\em Forschr. Physik} {64}(6-7) (2016), 551--564




\bibitem{ziegler:12}
Ziegler, G. M. {\em Lectures on Polytopes}. Springer and Busimess Media, 2012.






\end{thebibliography}
\end{document}